\def\lin{\operatorname{lin}}
\def\inter{\operatorname{int}}
\newtheorem{Th}{Theorem}
\newtheorem{Prop}[Th]{Proposition}
\newtheorem{Cor}[Th]{Corollary}
\theoremstyle{remark}
\newtheorem{R}{Remark}
\theoremstyle{definition}
\newtheorem{Df}{Definition}
\newtheorem{Ex}{Example}
\title{Beyond morphophoricity: $s$-tight IC measurements in geometric generalised probabilistic theories.}
\author{Anna Szymusiak}
\address{Faculty of Mathematics and Computer Science, Jagiellonian University, \L ojasiewicza 6, 30-348 Krak\'{o}w, Poland}
\email{anna.szymusiak@uj.edu.pl}
\begin{document}

\begin{abstract}
    The analysed in this paper new class of $s$-tight IC measurements contains both morphophoric measurements, preserving the geometry of the states space, and tight IC measurements, introduced nearly 20 years ago by Scott in the quantum case as optimal for the task of linear quantum tomography. By looking at the mathematical side of these classes we discover their common feature, which is also  preserved in the broader class of $s$-tight IC measurements: a particularly elegant form of the formula that can be seen as the generalised form of the Urgleichung known from the QBist approach to quantum theory. In particular, the tight IC measurements are identified as the ones for which this generalised Urgleichung takes an exceptionally simple form.
\end{abstract}
\maketitle
\section{Introduction}
The QBist approach \cite{Appetal11,Appetal17,DeBetal21,Fucetal22,FucSch09,FucSch11,FucSch13,Ros11} to the basis of quantum theory canonically distinguishes  SIC-POVMs  as exceptional measurements and focuses on two key features:\begin{enumerate}[i)]
    \item the geometry of the quantum state space is the same as the geometry of the \emph{qplex}, i.e. the set of possible probability distributions of the measurement outcomes, 
    \item the probabilities of any measurement (`on the ground') can be computed in a simple and elegant way from the probabilities of the SIC measurement (`in the sky') -- the formula is known as the `Urgleichung' or `primal equation'.
\end{enumerate}
Previously, we have considered the following generalisations of the above statements. Firstly \cite{SloSzy20}, we identified a class of morphophoric POVMs, i.e. a class of quantum measurements that preserve the geometry of the quantum state space. It turned out that the morphophoric version of the primal equation is still quite simple and elegant. Next \cite{SzySlo23}, we extended our approach from quantum theory to the GGPTs, geometric generalised probabilistic theories, showing that the basic ideas of QBism are not limited to the quantum world. This time in looking for further generalisations we take as a starting point the requirement that the probabilities of any measurement `on the ground' can be computed in a \emph{painless} way from the probabilities of the measurement `in the sky'. There is no coincidence that the word `painless' appears also in the title of the paper introducing the tight frames for the very first time \cite{Dauetal86}. Similarly as in the case of morphophoric measurements, tight frames will play a central role in characterising measurements fulfilling our requirements.    

An observation that turned up to be crucial in the previous works and remains this way is that both key features focus solely on what happens on the state space which is just a subset of an affine hyperplane of the whole space. In consequence, the essential properties of the measurement should reveal themselves in the \emph{traceless} parts of its effects, i.e. the projections of the effects onto the traceless hyperplane, if we talk about quantum theory, or onto the corresponding subspace in case of GGPTs. While in the morphophoric case these projections formed a tight frame, this time we just require them to be \emph{scalable} to a tight frame. Such approach is not entirely new since Scott already analysed a special case of such scalability in quantum case under a name of \emph{tight IC} POVMs \cite{Sco06}. The overlap of the classes of tight IC and morphophoric POVMs is nonempty: e.g. the 2-design POVMs fall into both definitions. Thus it is  natural to think of a class of measurements that would contain both these classes and preserve some of their properties even in a limited form. The general scalability property satisfy both these requirements, in or outside the quantum world.

The paper is organised as follows. Section \ref{GGPT}  covers preliminary topics in geometric generalised probabilistic theories (GGPTs). This notion has been introduced in \cite{SzySlo23} for GPTs such that the state space is equipped with some inner product, as geometry requires the notion of orthogonality.  Section \ref{framesintro} provides a necessary background in the frame theory. In Section \ref{classes} we introduce new classes of measurements, analyse their basic properties and give some examples in quantum setup. Finally, in Section \ref{genurg} we derive some generalisations of the primal equation for the newly introduced classes of measurements. In particular, we identify tight IC measurements to be the only ones for which the generalised Urgleichung with the canonical instrument involved takes the simple form of the classical total probability law plus a correction term.

\section{Geometric GPTs}\label{GGPT}

 The main idea behind generalised probabilistic theories is to provide a theoretical framework in which a physical theory can be treated as a probabilistic theory. It means that such theory needs to provide us some rules by which we could predict the statistics of the outcomes of an experiment if we know precisely how the physical system was prepared. Thus the basic ingredients of any GPT are states, understood as possible preparation procedures, effects, describing measurement apparatus resulting in `yes' or `no' outcomes, and a function assigning probabilities to the pairs of states and effects, representing the probability of obtaining the `yes' outcome. Ludwig's embedding theorem \cite[IV, Theorem 3.7]{Lud85} guarantees that under some minimal constraints ensuring some kind of physical realism, GPTs can be mathematically described in terms of order vector spaces. Although in general such space can be of infinite dimension, it is often assumed that its dimension is finite which can be interpreted as assuming that only finite number of parameters is required to uniquely describe the state of the system. While this operational approach to physical theory can be traced back to 1970's \cite{DavLew70,Lud70,Dav76}, it has been experiencing a revival for the last 20 years, see e.g. \cite{Bar07,BarWil11,Udu12,JanLal13,Baretal14,BarWil16,MulMas16,Lam17,Wil18,Wil19,Pla21,Wet21,Tak22}. For the more detailed introduction to the geometric GPTs see \cite{SzySlo23}.

An \emph{abstract state space} is a triple $(V,C,e)$ consisting of a (finite-dimensional) real vector space $V$ ordered by a convex cone $C$ with a distinguished strictly positive linear functional $e\in V^*$ called the \emph{unit effect}. The cone $C$ is assumed to be proper ($C\cap(-C)=\{0\}$), generating ($C-C=V$) and closed. The strict positivity of $e$ means that $e(x)>0$ for every $x>0$, i.e. for every $x\in C\setminus\{0\}$. In other words, $e\in\inter C^*$, where $C^*:=\{g\in V^*:g(x)\geq 0 \textnormal{ for every } x\in C\}$ is the dual cone of positive functionals. The 1-level set of the unit effect intersected with the cone $C$ forms a base $B:=\{x\in C: e(x)=1\}$ of $C$. The base $B$ is interpreted as the set of states of the system and the extreme points of $B$ as pure states. The effects are described by positive functionals $g\in C^*$ such that $g\leq e$. We assume the \emph{no restriction hypothesis} which means that all such effects are considered to be physically possible. Now, if the system is in the state $x$, the probability of obtaining the answer `yes' to the question represented by the effect $g$ is simply given by $g(x)\in[0,1]$. In particular, the unit effect $e$ corresponds to the trivial measurement that always gives the `yes' outcome (can be think of as asking `is there a system?'). Finally, by a \emph{measurement} we mean a sequence of nonzero effects $(\pi_j)_{j=1}^n$ such that $\sum_{j=1}^n\pi_j=e$. In consequence, $\pi_j(x)$  represents the probability that the measurement outcome is $j$ given that the pre-measurement state is $x\in B$, and the \emph{measurement map} $\pi:B\ni x\mapsto (\pi_j(x))_{j=1}^n\in\mathbb R^n$ maps the set of states into the probability simplex $\Delta_n:=\{p\in\mathbb R^n:p_j\geq 0 \textnormal{ for }j=1,\ldots,n \textnormal{ and }\sum_{j=1}^np_j=1\}$.  

Throughout the paper we will assume that the GPT in question is additionally equipped with some 
geometry, which is introduced first as an inner product $\langle\cdot,\cdot\rangle_0$ in the vector subspace $V_0:=e^{-1}(0)$ and then extended to the inner product in the whole space $V$ by choosing $m\in \inter B$ indicating the orthogonal direction to $V_0$ and the \emph{size parameter} $\mu>0$ specifying the squared length of vector $m$. In consequence, we obtain the inner product $\langle\cdot,\cdot\rangle$ in $V$ given by
\begin{equation*}
   \langle x,y\rangle=\langle x-e(x)m,y-e(y)m\rangle_0+e(x)e(y)\mu, \quad x,y\in V. 
\end{equation*}
Thus by a \emph{geometric generalised probabilistic theory} (\emph{GGPT}) we mean a 6-tuple $(V,C,e,m,\mu,\langle\cdot,\cdot\rangle_0)$. 
The inner product $\langle\cdot,\cdot\rangle_0$ describes fully the geometry of the set of states since $V_0=V_1-V_1$, where $V_1:=e^{-1}(1)$ is the affine subspace generated by $B$. Therefore it can be seen as some intrinsic property of the theory, the \emph{internal} geometry, while $m$ and $\mu$ describe the \emph{external} geometry, positioning $B$ in the whole space $V$. The choice of $m$ can be arbitrary but whenever there is a state in $B$ that is in some way distinguished, e.g. by the symmetries, it may be natural to choose such state as $m$. Finally, the size parameter $\mu$ can be chosen so that the inner product and the order structure are compatible. There are three types of such compatibility, expressed by the relations between the cone $C$ and the \emph{positive dual} cone of $C$ in $V$ defined by $C^+:=\{y\in V:\langle x,y\rangle\geq 0 \textnormal{ for all }x\in C\}$: $C$ is called \emph{infra-dual} if $C\subset C^+$, \emph{supra-dual} if $C^+\subset C$ and \emph{self-dual} if $C^+=C$. In such cases we can also say that the GGPT in question is infra- supra- or self-dual, respectively. It turns out, that by choosing the parameter $\mu$ sufficiently small or sufficiently big we can always make $C$ supra- or infra-dual, respectively. However, it is not always possible to obtain self-duality. For the purposes of this paper we assume that the parameter $\mu$ is set to guarantee supra-duality or, whenever possible, self-duality of the GGPT.

The inner product in $V$ induces the isometric linear isomorphism $T:V\to V^*$ given by $T(y)(x):=\langle x,y\rangle$. In particular, $T(C^+)=C^*$ thus if the GGPT is self-dual $T$ is also an order isomorphism. Note that $T^{-1}(e)=\frac{1}{\mu}m$. By the supra-duality assumption, $T$ allows us to identify measurement effects with (unnormalised) states, namely, for a measurement $\pi=(\pi_j)_{j=1}^n$ we put $v_j:=T^{-1}(\pi_j)\in C$ for $j=1,\ldots,n$. After normalisation we obtain a family of states $w_j:=v_j/e(v_j)=\frac{\mu}{\pi_j(m)}T^{-1}(\pi_j)\in B$ for $j=1,\ldots,n$.

As already mentioned in the introduction, since we focus on the set of states $B$ which is a subset of an affine hyperplane $V_1$, both the corresponding linear supspace $V_0$ and its image under the isomorphism $T$ denoted by $V_0^*:=T(V_0)=\{f\in V^*:f(m)=0\}$ should play a crucial role in revealing key features of the measurements. Let us denote by $P_0$ and $\mathcal P_0$ the orthogonal projections onto $V_0$ and $V_0^*$, respectively, i.e. $P_0:V\ni x\mapsto x-e(x)m\in V_0$ and $\mathcal P_0:V^*\ni f\mapsto f-f(m) e\in V^*_0$. Obviously, $\mathcal P_0\circ T=T\circ P_0$. 

With a GGPT we can also associate another constant, related strictly to the geometry of the set of states, namely $\chi:=\max\{\|x-m\|_0^2:x\in B\}$. Indeed, the value of $\chi$ is not affected by the choice of $\mu$. It is also easy to see that the states that maximise the distance from $m$ are necessarily pure. In particular, the GGPTs such that all pure states maximise this distance are called \emph{equinorm}.

\section{Frames, tight frames and scalable frames}\label{framesintro}

Although frames can be defined for any separable Hilbert space, for our purposes it suffices to focus on finite dimensional case. Moreover, for the same reason, we can restrict our attention to finite frames. In consequence, we do not need to discuss technical issues that arise in infinite cases. In the following we briefly present some basic facts concerning finite frames that can be found e.g. in \cite{Wal18}. The concept of scalable frames has been introduced in \cite{Kutetal13}.

Let us denote by $\mathcal H$  a $d$-dimensional Hilbert space over the field $\mathbb K$. 
\begin{Df}
We say that a sequence $H:=(h_j)_{j=1}^n$ of vectors in $\mathcal H$ is a {\bf frame} for $\mathcal H$ if there exist $\alpha,\beta>0$ such that \begin{equation}\label{frame}
    \alpha\|h\|^2\leq\sum_{j=1}^n|\langle h,h_j\rangle|^2\leq \beta\|h\|^2
\end{equation}
for all $h\in\mathcal H$. We refer to $\alpha$ and $\beta$ as {\bf lower} and {\bf upper frame bounds}, respectively.
\end{Df}
It is easy to see that $H$ is in fact nothing more than a spanning sequence. However, by writing it down in the above form, we can immediately distinguish a special case, in which the lower and upper bounds coincide:

\begin{Df}
We say that $H:=(h_j)_{j=1}^n$ is a {\bf tight frame} for $\mathcal H$ if there exists $\alpha>0$ such that \begin{equation}\label{tightframe}
    \sum_{j=1}^n|\langle h,h_j\rangle|^2= \alpha\|h\|^2
\end{equation}
for all $h\in\mathcal H$. We refer to $\alpha$ as the {\bf  frame bound}.
\end{Df}

With a frame $H$ we can associate the following operators:
\begin{description}
    \item[analysis operator] $X:\mathcal H\to\mathbb K^n$ given by $(Xh)_j:=\langle h,h_j\rangle$,
    \item[synthesis operator] $X^*:\mathbb K^n\to\mathcal H$ given by $X^*\gamma=\sum_{j=1}^n\gamma_jh_j$,
    \item[frame operator] $S:=X^*X:\mathcal H\to\mathcal H$ given by $Sh=\sum_{j=1}^n\langle h,h_j\rangle h_j$.
\end{description}
\begin{Th}
A frame $H$ is tight if and only if its frame operator $S$ is proportional to identity, that is, $S=\alpha I$ for some $\alpha>0$. In particular, in such case \begin{equation}\label{traceformula}
    \alpha=\frac{1}{\dim \mathcal H}\sum_{j=1}^n\|h_j\|^2.
\end{equation}
\end{Th}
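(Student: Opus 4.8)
The plan is to prove both implications via the characterisation of the frame operator $S$. For the forward direction, suppose $H$ is tight with bound $\alpha$, so that $\langle Sh, h\rangle = \sum_{j=1}^n |\langle h, h_j\rangle|^2 = \alpha\|h\|^2 = \langle \alpha I h, h\rangle$ for all $h\in\mathcal H$. Since $S$ is self-adjoint (indeed $S = X^*X$ is manifestly positive and self-adjoint), the equality of the associated quadratic forms forces $S = \alpha I$; over $\mathbb C$ this is immediate from polarisation, and over $\mathbb R$ it follows because a real symmetric operator is determined by its quadratic form. Conversely, if $S = \alpha I$ with $\alpha > 0$, then for every $h$ we get $\sum_{j=1}^n |\langle h, h_j\rangle|^2 = \langle Sh, h\rangle = \alpha\|h\|^2$, which is exactly \eqref{tightframe}; positivity of $\alpha$ together with the fact that $H$ spans (so the sum is not identically zero) confirms $H$ is a genuine tight frame.

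For the trace formula \eqref{traceformula}, I would compute $\operatorname{tr} S$ in two ways. On one hand, $S = \alpha I$ on a $\dim\mathcal H$-dimensional space gives $\operatorname{tr} S = \alpha \dim\mathcal H$. On the other hand, writing $S = X^*X = \sum_{j=1}^n h_j h_j^*$ (each summand being the rank-one operator $h\mapsto \langle h, h_j\rangle h_j$), linearity and cyclicity of the trace yield $\operatorname{tr} S = \sum_{j=1}^n \operatorname{tr}(h_j h_j^*) = \sum_{j=1}^n \langle h_j, h_j\rangle = \sum_{j=1}^n \|h_j\|^2$. Equating the two expressions and dividing by $\dim\mathcal H$ gives the claimed value of $\alpha$.

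The only genuinely delicate point is the real-scalar case of the forward implication: over $\mathbb R$, two distinct operators can in principle share the same quadratic form, but this cannot happen for self-adjoint ones, since $\langle (S - \alpha I)h, h\rangle \equiv 0$ together with self-adjointness of $S-\alpha I$ forces $S - \alpha I = 0$ (the quadratic form of a symmetric operator vanishes identically iff the operator does, by the polarisation identity $2\langle Ah,k\rangle = \langle A(h+k), h+k\rangle - \langle Ah,h\rangle - \langle Ak,k\rangle$ valid in the real symmetric case). Everything else is routine linear algebra, so I would present the argument compactly, emphasising the two-way trace computation for \eqref{traceformula}.
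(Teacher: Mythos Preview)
Your argument is correct and entirely standard. Note, however, that the paper does not actually supply a proof of this theorem: it is quoted as a basic fact from frame theory (with reference to \cite{Wal18}), and the only remark the paper adds is that equation~\eqref{traceformula} ``is derived by calculating the trace of the frame operator $S$'', which is precisely your two-way trace computation. So there is no discrepancy in approach to report---your proof simply fills in what the paper leaves to the literature.
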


The equality (\ref{traceformula}) is derived by calculating the trace of the frame operator $S$ and thus sometimes referred to as the {\it trace formula}.

Since a frame is a spanning sequence, the associated synthesis operator $X^*$ (and thus the analysis operator $X$ as well) is always full-rank. Let $Y^*$ be any left inverse of $X$, that is, $Y^*X=I$ and let  $h_j':=Y^*\varepsilon^{(j)}$, where $\varepsilon^{(1)},\ldots,\varepsilon^{(n)}$ denotes the canonical basis of $\mathbb K^n$. Obviously, $H':=(h_j')_{j=1}^n$ is also a frame for $\mathcal H$ and $Y^*$ is its associated synthesis operator. Moreover, $X^*Y=(Y^*X)^*=I^*=I$. We can distinguish a special case, in which $Y^*$ is the pseudoinverse of $X$, that is, $Y^*=(X^*X)^{-1}X^*=S^{-1}X^*$, and, in consequence $h_j'=S^{-1}h_j$. These considerations allow us to put the following definitions:
\begin{Df} 
We say that $(h_j')_{j=1}^n$ is a {\bf dual frame} of $(h_j)_{j=1}^n$ if  
\begin{equation}
    h=\sum_{j=1}^n\langle h,h_j\rangle h_j'=\sum_{j=1}^n\langle h,h_j'\rangle h_j
\end{equation}
for all $h\in\mathcal H$. In particular, we say that $(h_j')_{j=1}^n$ is the {\bf canonical dual frame} if $h_j'=S^{-1}h_j$ for $j=1,\ldots,n$.
\end{Df}

It is easy to see that the importance of dual frames is that they give us a recipe how to reconstruct a vector $h\in\mathcal H$ from the sequence of its inner products with the frame vectors $(\langle h,h_j\rangle)_{j=1}^n$. However, in general, this recipe requires finding an inverse of a matrix. It would be therefore desirable to have a class of frames for which the computational complexity of the task can be strongly reduced. Obviously, the tight frames are distinguished by being, up to the scalar factor, their own (canonical) dual frames.  Thus, the frames that are in some sense `simple' deformations of tight frames arise as the natural candidates for the class in question:
\begin{Df}
A frame $(h_j)_{j=1}^n$ for $\mathcal H$ is called {\bf scalable} if there exist nonnegative constants $(s_j)_{j=1}^n$ such that $(s_jh_j)_{j=1}^n$ is a tight frame for $\mathcal H$.
\end{Df}
\begin{R} The scalability of a frame can be equivalently expressed as the existence of a dual frame of a specific form. Indeed, since
\begin{equation}
    \sum_{j=1}^n\langle h,s_jh_j\rangle s_jh_j=\sum_{j=1}^n\langle h,h_j\rangle s_j^2h_j,
\end{equation}
$(h_j)_{j=1}^n$ is scalable with scales $(s_j)_{j=1}^n$ to a tight frame with the frame bound $A$ if and only if $\left(\frac{s_j^2}{A}h_j\right)_{j=1}^n$ is its dual frame.
\end{R}
\begin{R}
If $X$ is the analysis operator for $(h_j)_{j=1}^n$ and  
 $D_s:=\textnormal{diag}(s_1,s_2,\ldots,s_n)$, the analysis operator for $(s_jh_j)_{j=1}^n$ is given by  $D_sX$.
\end{R}
\section{Special classes of measurements in GGPTs}\label{classes}

One of the desirable properties of a measurement is the possibility of recovering a measured state from the statistics of the measurement outcomes, i.e. these statistics need to uniquely determine the pre-measurement state. Such property is known as the informational completeness:
\begin{Df}
A measurement $\pi$ is called {\bf informationally complete} (or just {\bf IC} for short) if $\pi(x)=\pi(y)$ implies $x=y$ for all $x,y\in B$.
\end{Df}

The following simple characterisation gives us an insight into the structure of IC measurements:

\begin{Th}\cite[Lemma 2.1]{SinStu92}\cite[Thm 10]{SzySlo23}
The following conditions are equivalent:
\begin{enumerate}[i.]
    \item $\pi$ is informationally complete,
		\item\label{framev*} $\lin\{\pi_j:j=1,\ldots,n\}=V^*$,
		\item $\lin\{\mathcal{P}_0(\pi_j):j=1,\ldots,n\}=V_0^*$,
		\item\label{framev} $\lin\{T^{-1}(\pi_j):j=1,\ldots,n\}=V$,
		\item$\lin\{P_0(T^{-1}(\pi_j)):j=1,\ldots,n\}=V_0$.
\end{enumerate}
\end{Th}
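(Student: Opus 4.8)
The plan is to establish the chain of equivalences by a mix of elementary linear algebra and the fact that $T$ is a linear isomorphism. First, I would show the equivalence of (i) and (\ref{framev*}). Since $\pi(x) = (\pi_j(x))_{j=1}^n$, the condition $\pi(x)=\pi(y)$ means $\pi_j(x-y)=0$ for all $j$, i.e.\ $x-y$ annihilates $\lin\{\pi_j\}$. As $x,y\in B\subset V_1$, we have $x-y\in V_0$; so IC is equivalent to: the only vector in $V_0$ annihilated by all $\pi_j$ is $0$. If $\lin\{\pi_j\}=V^*$ this is immediate. Conversely, if $\lin\{\pi_j\}\subsetneq V^*$, its annihilator in $V$ is a nonzero subspace $W$; I would need to check $W\cap V_0\neq\{0\}$. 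This holds because $e\in\lin\{\pi_j\}$ (the effects sum to $e$), so $W\subset\ker e = V_0$, giving $W=W\cap V_0\neq\{0\}$ and hence failure of IC. This last observation — that $e$ lies in the span of the effects — is the small but essential point that makes (i)$\Leftrightarrow$(\ref{framev*}) work, and I expect it to be the only place requiring genuine care.

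Next I would pass from (\ref{framev*}) to (\ref{framev}) simply by applying $T^{-1}$: since $T:V\to V^*$ is a linear isomorphism, $\lin\{\pi_j\}=V^*$ if and only if $\lin\{T^{-1}(\pi_j)\}=V$. This is purely formal given the setup in Section \ref{GGPT}.

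For the two ``$V_0$'' conditions, I would use the projections $P_0$ and $\mathcal P_0$. The implication (\ref{framev})$\Rightarrow$(v) is clear: applying the linear map $P_0$ to a spanning set of $V$ gives a spanning set of $P_0(V)=V_0$. For the converse (v)$\Rightarrow$(\ref{framev}), note that $P_0(T^{-1}(\pi_j)) = T^{-1}(\pi_j) - \pi_j(m)\,\tfrac1\mu m$ (using $P_0 x = x - e(x)m$ and $T^{-1}(e)=\tfrac1\mu m$); so if the $P_0(T^{-1}(\pi_j))$ span $V_0$, then the $T^{-1}(\pi_j)$ span a subspace containing $V_0$, and since some $\pi_j(m)>0$ — indeed $\sum_j\pi_j(m)=e(m)=1$ — at least one $T^{-1}(\pi_j)$ has a nonzero $m$-component, so the span is all of $V$. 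The equivalence (iii)$\Leftrightarrow$(v) then follows by applying the isomorphism $T$ together with $\mathcal P_0\circ T = T\circ P_0$, and (iii)$\Leftrightarrow$(\ref{framev*}) follows from (\ref{framev})$\Leftrightarrow$(v) the same way, so the circle closes. The main obstacle, such as it is, is bookkeeping: making sure at each ``projection'' step that the $m$-direction is recovered, which always comes down to $\sum_j\pi_j(m)=1>0$.
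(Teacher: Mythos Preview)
The paper does not give its own proof of this statement --- it is quoted from \cite{SinStu92} and \cite{SzySlo23} and used without argument --- so there is nothing to compare against. Your plan is correct and is the standard route.

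Two small points to tighten when you write it out in full. First, in (i)$\Leftrightarrow$(ii), passing from ``some $0\neq w\in V_0$ is annihilated by all $\pi_j$'' to ``IC fails'' requires exhibiting two actual \emph{states} in $B$ with the same image; this uses $m\in\inter B$ (take $x=m$ and $y=m+\varepsilon w$ for small $\varepsilon>0$). Second, in (v)$\Rightarrow$(iv), the assertion that $\lin\{v_j\}\supset V_0$ does not follow directly from $\lin\{P_0(v_j)\}=V_0$; rather, you first need $m\in\lin\{v_j\}$ (from $\sum_j v_j=T^{-1}(e)=\mu^{-1}m$), and \emph{then} $P_0(v_j)=v_j-\tfrac{\pi_j(m)}{\mu}m\in\lin\{v_j\}$ gives $V_0\subset\lin\{v_j\}$, hence $\lin\{v_j\}=V$. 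Both fixes are precisely the ``$\sum_j\pi_j=e$'' bookkeeping you already flagged --- just applied in the right order.
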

Let us observe, that the conditions {\it\ref{framev*}.} and {\it\ref{framev}.} mean that $(\pi_j)_{j=1}^n$ and $(T^{-1}(\pi_j))_{j=1}^n=(v_j)_{j=1}^n$ are frames for $V^*$ and $V$, respectively. Moreover, since $\langle v_j,x\rangle=\pi_j(x)$, the measurement map $\pi$ can be interpreted as the analysis operator for $(v_j)_{j=1}^n$. It means that in order to recover the initial state $x$ it suffices to find a dual frame of $(v_j)_{j=1}^n$. It seems however, that it might be too much to ask for since we are only interested in the states, i.e. the vectors occupying (a part of) an affine subspace in $V$. Indeed, for $x\in B$ we have $\pi_j(x)-\pi_j(m)=\pi_j(x-m)=\langle v_j,x-m\rangle=\langle P_0(v_j),P_0(x)\rangle$ and $(\pi_j(m))_{j=1}^n$ can be treated as some known constants. Therefore we can reduce the dimension by one and look for the dual frames of $(P_0(v_j))_{j=1}^n$ in $V_0$. The special class of measurements with the property that $(P_0(v_j))_{j=1}^n$ is a tight frame for $V_0$ (and so is $(\mathcal P_0(\pi_j))_{j=1}^n$ for $V_0^*$) has been thoroughly investigated first  in the quantum \cite{SloSzy20} and then in the GGPT context  \cite{SzySlo23} under the name of morphophoric measurements. It seems quite natural to consider now a broader class:

\begin{Df} We say that $\pi$ is an {\bf $s$-tight IC} measurement if  $\left(\mathcal P_0\left(\pi_j\right)\right)_{j=1}^n$ is a scalable frame for $V_0^*$. 
\end{Df}
\begin{R}
If $\pi$ is an $s$-tight IC measurement with scalability constants $(s_j)_{j=1}^n$, the frame bound $\alpha$ for $(s_j\mathcal P_0(\pi_j))_{j=1}^n$ can be calculated using the trace formula (\ref{traceformula}):
$$\alpha=\frac{1}{\dim V_0}\sum_{j=1}^ns_j^2\left(\|\pi_j\|^2-\frac{1}{\mu}(\pi_j(m))^2\right).$$
\end{R}

 It might be useful to think of $s$-tight IC measurements in terms of the corresponding vectors $v_j=T^{-1}(\pi_j)$ or states $w_j=v_j/e(v_j)$. Since $T$ is not only an isometric isomorphism between $V$ and $V^*$ but also between $V_0$ and $V_0^*$, the proof of the  following statement can be easily derived:
 \begin{Prop} The following conditions are equivalent:
\begin{enumerate}[i.]
    \item $\pi$ is an $s$-tight IC measurement with scales $(s_j)_{j=1}^n$,
    \item $\left(P_0\left(v_j\right)\right)_{j=1}^n$ is a scalable frame for $V_0$ with scales $(s_j)_{j=1}^n$,
    \item $(w_j-m)_{j=1}^n$ is a scalable frame for $V_0$ with scales $(s_je(v_j))_{j=1}^n=(s_j\pi_j(m)/\mu))_{j=1}^n$.
\end{enumerate}
\end{Prop}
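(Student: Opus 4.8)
The plan is to deduce all three equivalences from two structural facts already at our disposal: the map $T$ restricts to an isometric isomorphism $T|_{V_0}\colon V_0\to V_0^*$ satisfying $\mathcal P_0\circ T=T\circ P_0$, and the pointwise identity $P_0(v_j)=e(v_j)\,(w_j-m)$ with $e(v_j)=\pi_j(m)/\mu>0$. Granting these, $(i)\Leftrightarrow(ii)$ becomes the transport of a scalable frame along an isometry, and $(ii)\Leftrightarrow(iii)$ becomes a term-by-term rescaling of the frame vectors, so the proposition reduces to a change of variables.

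For $(i)\Leftrightarrow(ii)$ I would first record the auxiliary observation that if $U\colon\mathcal H_1\to\mathcal H_2$ is an isometric isomorphism of (real) Hilbert spaces and $(h_j)_{j=1}^n$ is a sequence in $\mathcal H_1$ with frame operator $S$, then $(Uh_j)_{j=1}^n$ has frame operator $USU^{-1}$; this is immediate from $\langle h,Uh_j\rangle=\langle U^{-1}h,h_j\rangle$, and in particular (by the characterisation of tight frames via frame operators proportional to the identity) $(h_j)_{j=1}^n$ is a tight frame with bound $\alpha$ if and only if $(Uh_j)_{j=1}^n$ is. Taking $U=T|_{V_0}$ and using $v_j=T^{-1}(\pi_j)$ together with $\mathcal P_0\circ T=T\circ P_0$ gives $\mathcal P_0(\pi_j)=T\big(P_0(v_j)\big)$, hence also $s_j\,\mathcal P_0(\pi_j)=T\big(s_jP_0(v_j)\big)$ (the analysis operator of a scaled frame being $D_sX$, as noted in Section~\ref{framesintro}). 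Applying the auxiliary observation to $h_j=s_jP_0(v_j)$ then shows $(s_j\mathcal P_0(\pi_j))_{j=1}^n$ is a tight frame for $V_0^*$ exactly when $(s_jP_0(v_j))_{j=1}^n$ is a tight frame for $V_0$, so $(i)$ and $(ii)$ are equivalent with the same scales (and even the same frame bound). Note this simultaneously transports the underlying frame (spanning) property, since isometric isomorphisms carry spanning sequences to spanning sequences.

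For $(ii)\Leftrightarrow(iii)$ I would use that $w_j\in B$ forces $e(w_j)=1$, so $P_0(w_j)=w_j-e(w_j)m=w_j-m$; since dividing by the scalar $e(v_j)$ commutes with the linear projection $P_0$, this yields $P_0(v_j)=e(v_j)P_0(w_j)=e(v_j)(w_j-m)$, where $e(v_j)=\langle v_j,m\rangle/\mu=\pi_j(m)/\mu$ as computed in Section~\ref{GGPT}. Therefore the sequences $(s_jP_0(v_j))_{j=1}^n$ and $(s_je(v_j)(w_j-m))_{j=1}^n$ coincide term by term, so one is a tight frame for $V_0$ precisely when the other is; and since $e(v_j)>0$ (supra-duality makes $v_j\in C\setminus\{0\}$ and $e$ strictly positive), rescaling by $e(v_j)$ also preserves the spanning property. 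Hence $(ii)$ with scales $(s_j)_{j=1}^n$ is equivalent to $(iii)$ with scales $(s_je(v_j))_{j=1}^n=(s_j\pi_j(m)/\mu)_{j=1}^n$.

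I do not expect a genuine obstacle here; the statement is essentially bookkeeping. The only points that need a moment's care are checking that \emph{scalability}, and not merely the tight-frame property, is preserved under $T|_{V_0}$ with the scales tracked exactly (covered by the auxiliary observation plus the scaled-analysis-operator remark), and confirming that every $e(v_j)$ is strictly positive so that passing between $v_j$, $P_0(v_j)$ and $w_j-m$ never destroys the frame condition — which is exactly what the supra-duality assumption guarantees.
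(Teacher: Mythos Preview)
Your proposal is correct and follows exactly the approach the paper indicates: the paper's entire justification is the one-line remark that ``$T$ is not only an isometric isomorphism between $V$ and $V^*$ but also between $V_0$ and $V_0^*$,'' from which the proposition is declared ``easily derived.'' You have simply unpacked this hint in full detail --- transporting scalability along the isometry $T|_{V_0}$ for $(i)\Leftrightarrow(ii)$ and using the termwise identity $P_0(v_j)=e(v_j)(w_j-m)$ for $(ii)\Leftrightarrow(iii)$ --- so there is no divergence in method, only in explicitness.
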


Obviously, the morphophoric measurements are $s$-tight IC with scales $s_j=const$. However, it turns out that there is another subclass of $s$-tight IC measurements that deserves a special attention: the ones with the scalability constants $s_j=1/\sqrt{\pi_j(m)}$, $j=1,\ldots,n$. Such measurements has been already considered (in the context of linear quantum tomography) by Scott \cite{Sco06} in quantum setup under the name of \emph{tight IC POVMs}. We adapt this nomenclature.

\begin{Df} We say that $\pi$ is a {\bf tight IC} measurement if it is $s$-tight IC with scales $\left(1/\sqrt{\pi_j(m)}\right)_{j=1}^n$. The frame bound $\alpha$ then simplifies to
\begin{equation}\label{tightICbound}
    \alpha=\frac{1}{\dim V_0}\left(\sum_{j=1}^n\frac{\|\pi_j\|^2}{\pi_j(m)}-\frac{1}{\mu}\right)=\frac{1}{\mu\dim V_0}\left(\sum_{j=1}^n\pi_j(w_j)-1\right).
\end{equation}
\end{Df}

\begin{R}\label{tightmorph}
If a measurement $\pi$ is \emph{unbiased}, i.e. $\pi_j(m)=1/n$ for $j=1,\ldots,n$, then obviously $\pi$ is tight IC if and only if it is morphophoric. However, these are not the only measurements that are both morphophoric and tight IC. Let us recall \cite[Prop. 15.i.]{SzySlo23} that if $\pi^1,\ldots,\pi^m$ are morphophoric measurements, then also $\pi:=(t_1\pi^1)\cup\ldots\cup(t_m\pi^m)$ is morphophoric for any $t_1,\ldots,t_m\geq 0$ such that $t_1+\ldots+t_m=1$. It is easy to see that if $\pi^1,\ldots,\pi^m$ are additionally unbiased (and so tight IC), the measurement $\pi$ is always tight IC, although it does not need to be unbiased. The relations between morphophoric, tight IC and $s$-tight IC measurements are visualised in the Figure \ref{figclasses}.\end{R}

\begin{figure}[htb]
    \centering
    \includegraphics[scale=0.4]{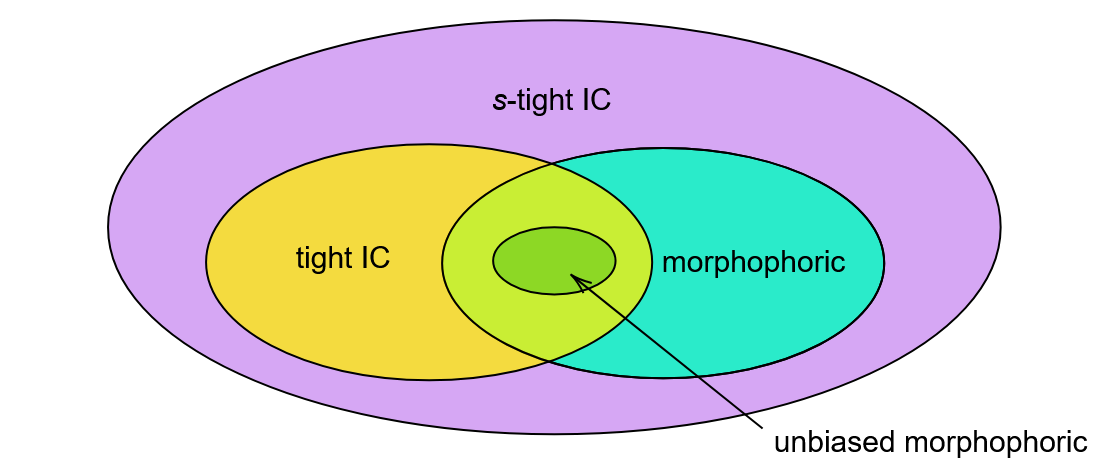}
    \caption{Relations between different classes of measurements.}
    \label{figclasses}
\end{figure}

It turns out that tight IC measurements have an additional interesting property which can be obtained as a generalisation of \cite[eq. (50)]{Sco06}.  

\begin{Prop}\label{tightproperty}
For a  measurement $\pi$ let $S$ denote the frame operator for $\left(\frac{1}{\sqrt{\pi_j(m)}}v_j\right)_{j=1}^n$ and $S_0$ -- the frame operator for $\left(\frac{1}{\sqrt{\pi_j(m)}}P_0(v_j)\right)_{j=1}^n$. Then $S=S_0+\frac{1}{\mu}P_m$, where $P_m=I-P_0$ is the orthogonal projection onto $\textnormal{lin}\{m\}$. Moreover, $\pi$ is a tight IC measurement if and only if $S=\alpha P_0+\frac{1}{\mu}P_m$ for some $\alpha>0$. 
\end{Prop}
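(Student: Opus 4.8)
The plan is to compute the frame operator $S$ directly and decompose it according to the orthogonal splitting $V=V_0\oplus\operatorname{lin}\{m\}$. Writing $u_j:=\frac{1}{\sqrt{\pi_j(m)}}v_j$ and using $v_j=P_0(v_j)+e(v_j)m$ together with $e(v_j)=\pi_j(m)/\mu$ (from the definition of $v_j$ and $T^{-1}(e)=\tfrac1\mu m$), one has $u_j=\frac{1}{\sqrt{\pi_j(m)}}P_0(v_j)+\frac{\sqrt{\pi_j(m)}}{\mu}m$. Then for any $h\in V$,
\begin{equation*}
Sh=\sum_{j=1}^n\langle h,u_j\rangle u_j,
\end{equation*}
and expanding the inner product using orthogonality of $P_0(v_j)$ and $m$ splits $Sh$ into four sums. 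The two cross terms, after summation, involve $\sum_j P_0(v_j)$ and $\sum_j\pi_j(m)=e$-type quantities; the key observation is that $\sum_{j=1}^n v_j=T^{-1}(\sum_j\pi_j)=T^{-1}(e)=\tfrac1\mu m$, hence $\sum_{j=1}^n P_0(v_j)=P_0(\tfrac1\mu m)=0$, so both cross terms vanish. The diagonal-in-$V_0$ term is exactly $S_0h$ restricted appropriately, and the diagonal-in-$m$ term is $\sum_j\frac{\pi_j(m)}{\mu^2}\langle h,m\rangle m=\frac{1}{\mu^2}\langle h,m\rangle m\sum_j\pi_j(m)=\frac{1}{\mu^2}\langle h,m\rangle m$; since $\|m\|^2=\mu$ this equals $\frac1\mu P_m h$. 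This proves $S=S_0+\frac1\mu P_m$.

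For the second assertion, first note that $S_0$ maps $V_0$ to $V_0$ and annihilates $m$ (being the frame operator of vectors lying in $V_0$), while $P_m$ annihilates $V_0$ and fixes $m$; thus $S$ acts block-diagonally with blocks $S_0|_{V_0}$ on $V_0$ and multiplication by $\tfrac1\mu$ on $\operatorname{lin}\{m\}$. Consequently $S=\alpha P_0+\tfrac1\mu P_m$ for some $\alpha>0$ if and only if $S_0|_{V_0}=\alpha\,I_{V_0}$, i.e.\ the frame operator of $\big(\tfrac{1}{\sqrt{\pi_j(m)}}P_0(v_j)\big)_{j=1}^n$ is proportional to the identity on $V_0$. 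By the tight-frame characterisation theorem quoted earlier, this is equivalent to $\big(\tfrac{1}{\sqrt{\pi_j(m)}}P_0(v_j)\big)_{j=1}^n$ being a tight frame for $V_0$, which by the Proposition relating $s$-tight IC measurements to the vectors $P_0(v_j)$ is precisely the statement that $\pi$ is $s$-tight IC with scales $\big(1/\sqrt{\pi_j(m)}\big)_{j=1}^n$, i.e.\ tight IC. The value of $\alpha$ then agrees with (\ref{tightICbound}) via the trace formula applied to $S_0$.

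The computation is essentially routine bookkeeping; the one point that requires care — and which I expect to be the only real step — is recognising and using $\sum_j v_j=\tfrac1\mu m$ to kill the cross terms, since without it the decomposition would not be clean. A secondary point worth stating explicitly is that $S_0$, although a priori an operator on all of $V$, vanishes on $\operatorname{lin}\{m\}$ and preserves $V_0$, so that ``$S_0=\alpha P_0$'' is the correct way to phrase proportionality to identity on the relevant subspace; this is what makes the two displayed forms of $S$ match up.
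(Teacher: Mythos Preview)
Your proof is correct and follows essentially the same route as the paper: both arguments use the orthogonal splitting $V=V_0\oplus\lin\{m\}$ and the identity $\sum_j v_j=T^{-1}(e)=\tfrac{1}{\mu}m$ (equivalently $\sum_j P_0(v_j)=0$) to make the cross terms vanish, and then read off the ``moreover'' part directly from the definition of tight IC. The only cosmetic difference is that the paper works at the operator level---showing $SP_m=\tfrac{1}{\mu}P_m$ and then expanding $S=P_0SP_0+P_0SP_m+P_mSP_0+P_mSP_m$---whereas you decompose each vector $u_j$ and expand the frame operator term by term; the content is the same.
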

\begin{proof}
Let us first observe that $SP_m=\frac{1}{\mu}P_m$. Indeed, for $x\in V$ we have $$SP_m(x)=e(x)S(m)=e(x)\sum_{j=1}^n\frac{1}{\pi_j(m)}\langle m,v_j\rangle v_j=e(x)\sum_{j=1}^n v_j=\frac{e(x)}{\mu}m=\frac{1}{\mu}P_m(x).$$
In consequence,
$$S=P_0SP_0+P_0SP_m+P_mSP_0+P_mSP_m=S_0+\frac{1}{\mu}P_0P_m+\frac{1}{\mu}P_mP_0+\frac{1}{\mu}P_m^2=S_0+\frac{1}{\mu}P_m.$$
The `moreover' part is straightforward from the definition of a tight IC measurement.
\end{proof}

From the frame theoretical point of view this property can be equivalently expressed by saying that $\left((1/\sqrt{\pi_j(m)})v_j\right)_{j=1}^n$ is a \emph{lift} of $\left((1/\sqrt{\pi_j(m)})
P_0(v_j)\right)_{j=1}^n$, see \cite[Df. 5.2]{Wal18}.

In quantum theory, one often distinguishes measurements that are \emph{rank 1}, i.e. the effects  constituting such measurement are rank 1 operators. In particular, it means that $\arg\max_{x\in B}\pi_j(x)$ is a singleton for $j=1,\ldots,n$. In the language of self-dual generalised probabilistic theories it also means that these effects lie on the extreme rays of the dual cone (note that these two properties do not coincide in general for GPTs that are not self-dual). But since quantum theory is equinorm, all the extreme rays are also maximally distant from the central ray $\{tI:t\geq 0\}$. Thus we consider the following to be a generalisation of rank 1 measurements in the general (not necessarily equinorm) self-dual setup:
\begin{Df}
We say that a measurement $\pi$ in a self-dual GGPT is a {\bf $\chi$-ray} measurement if the effects $\pi_j$, $j=1,\ldots,n$, lie on the rays of the dual cone maximally distant from the central ray $\{te:t\geq 0\}$.
\end{Df}
\begin{R}\label{rank1R}
The reason why we call these measurements $\chi$-ray is that they satisfy  $\|w_j-m\|^2_0=\chi$, $j=1,\ldots,n$.
\end{R} 
It turns out that in such case we can bind together three constants of the theory: the \emph{measurement constant} $\alpha\mu$, the \emph{space constant} $\chi/\mu$ and the \emph{space dimension} $\dim V_0$ in a similar way that it was done for the morphophoric regular measurements in \cite[Theorem 23]{SzySlo23}. However, unlike in that  case, this time the \emph{measurement dimension} $n$ is not included.
\begin{Th}\label{rank1constants}
Let $\pi$ be a $\chi$-ray tight IC measurement in a self-dual GGPT. Then
\begin{equation*}
    \alpha\mu=\frac{\chi/\mu}{\dim V_0}.
    \end{equation*}
\end{Th}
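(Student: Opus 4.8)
The plan is to plug the $\chi$-ray hypothesis into the trace formula (\ref{tightICbound}) already available for tight IC measurements. That formula reads
\[
  \alpha\mu=\frac{1}{\dim V_0}\Bigl(\sum_{j=1}^n\pi_j(w_j)-1\Bigr),
\]
so the whole statement reduces to proving $\sum_{j=1}^n\pi_j(w_j)=1+\chi/\mu$.

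First I would rewrite each term using the identification of effects with states. Since $\pi_j=T(v_j)$ we have $\pi_j(x)=\langle v_j,x\rangle$ for all $x$, and since $v_j=e(v_j)\,w_j$ with $e(v_j)=\pi_j(m)/\mu$ (as recorded in the Proposition following the definition of $s$-tight IC measurements), this gives $\pi_j(w_j)=e(v_j)\,\|w_j\|^2=\frac{\pi_j(m)}{\mu}\|w_j\|^2$. The heart of the argument is then the computation of $\|w_j\|^2$: because $w_j\in B$, the defining formula for $\langle\cdot,\cdot\rangle$ specialises, with $e(w_j)=1$, to $\|w_j\|^2=\|w_j-m\|_0^2+\mu$; and for a $\chi$-ray measurement Remark \ref{rank1R} yields $\|w_j-m\|_0^2=\chi$, hence $\|w_j\|^2=\chi+\mu$ for every $j$. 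This is the only point at which the $\chi$-ray assumption is used, and it is precisely what makes the result independent of the measurement dimension $n$.

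Substituting, $\pi_j(w_j)=\frac{\pi_j(m)}{\mu}(\chi+\mu)$, and summing over $j$ with $\sum_{j=1}^n\pi_j(m)=e(m)=1$ gives $\sum_{j=1}^n\pi_j(w_j)=\frac{\chi+\mu}{\mu}=1+\frac{\chi}{\mu}$; feeding this back into the trace formula finishes the proof. I do not expect a genuine obstacle here: the only care needed is bookkeeping of the external-geometry constants ($\|m\|^2=\mu$, $e=T(m/\mu)$, hence $e(v_j)=\pi_j(m)/\mu$) and making sure the $\chi$-ray identity $\|w_j-m\|_0^2=\chi$ of Remark \ref{rank1R} is invoked in the self-dual setting in which it is stated.
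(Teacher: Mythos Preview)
Your proposal is correct and follows essentially the same approach as the paper: start from the trace formula (\ref{tightICbound}), rewrite $\pi_j(w_j)=e(v_j)\|w_j\|^2$, use the $\chi$-ray hypothesis via Remark~\ref{rank1R} to get $\|w_j\|^2=\chi+\mu$, and sum using $\sum_j e(v_j)=1/\mu$. Your write-up simply spells out a couple of the bookkeeping identities (e.g.\ $e(v_j)=\pi_j(m)/\mu$ and the decomposition $\|w_j\|^2=\|w_j-m\|_0^2+\mu$) that the paper leaves implicit.
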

\begin{proof}
By applying (\ref{tightICbound}) and Remark \ref{rank1R} we get
\begin{align*}\alpha\mu&=\frac{1}{\dim V_0}\left(\sum_{j=1}^n\pi_j(w_j)-1\right)=\frac{1}{\dim V_0}\left(\sum_{j=1}^ne(v_j)\|w_j\|^2-1\right)=\frac{1}{\dim V_0}\left(\sum_{j=1}^ne(v_j)(\chi+\mu)-1\right)\\&=\frac{1}{\dim V_0}\left(\frac{\chi+\mu}{\mu}-1\right)=\frac{\chi}{\mu\dim V_0}.
\end{align*}
\end{proof}

\begin{Ex}
Let us consider a 3-parameter family of informationally complete POVMs on $\mathbb C^2$ given by 
\begin{equation*}
    \Pi_1=aI+bZ,\ \Pi_2=aI-bZ,\ \Pi_3=\frac{1-2a}{3}I+cX,\ \Pi_4=\frac{1-2a}{3}I-\frac{c}{2}X+\frac{\sqrt{3}c}{2}Y,\ \Pi_5=\frac{1-2a}{3}I-\frac{c}{2}X-\frac{\sqrt{3}c}{2}Y,
\end{equation*} where $a\in (0,\frac{1}{2})$, $b\in (0,a]$, $c\in (0,\frac{1-2a}{3}]$ (these are the maximal ranges of parameters, for which the operators $\Pi_j$ are positive semidefinite and the resulting POVM is informationally complete), and $X,Y$ and $Z$ denote the Pauli matrices. In particular, such POVM is rank 1 for $b=a$ and $c=\frac{1-2a}{3}$. The projections $P_0(\Pi_j)$ onto the 3-dimensional traceless subspace can be indentified as the following vectors:
\begin{equation*}
    (0,0,b),\ (0,0,-b),\ (c,0,0),\ (-\frac{c}{2},\frac{\sqrt{3}c}{2},0),\ (-\frac{c}{2},-\frac{\sqrt{3}c}{2},0).
\end{equation*}
It is easy to see that these vectors constitute a scalable frame since they can be seen as a disjoint union of tight frames for $\mathbb R$ and $\mathbb R^2$. Such POVM is thus $s$-tight IC for any choice of the parameters $a,b$ and $c$, and also:
\begin{itemize}
    \item morphophoric if $c=\frac{2\sqrt{3}b}{3}$
    \item rank 1 morphophoric if $a=b=\frac{\sqrt{3}-1}{4}$ and $c=\frac{3-\sqrt{3}}{6}$
    \item tight IC if $c^2=\frac{4b^2-8ab^2}{9a}$
    \item rank 1 tight IC if $a=b=\frac{1}{6}$ and $c=\frac{2}{9}$
    \item morphophoric tight IC if $a=\frac{1}{5}$ and $c=\frac{2\sqrt{3}b}{3}$.
\end{itemize}
The structure of the set of parameters is presented in Figure \ref{parameters}.
\begin{figure}[htb]
   \centering
    \includegraphics[scale=0.5]{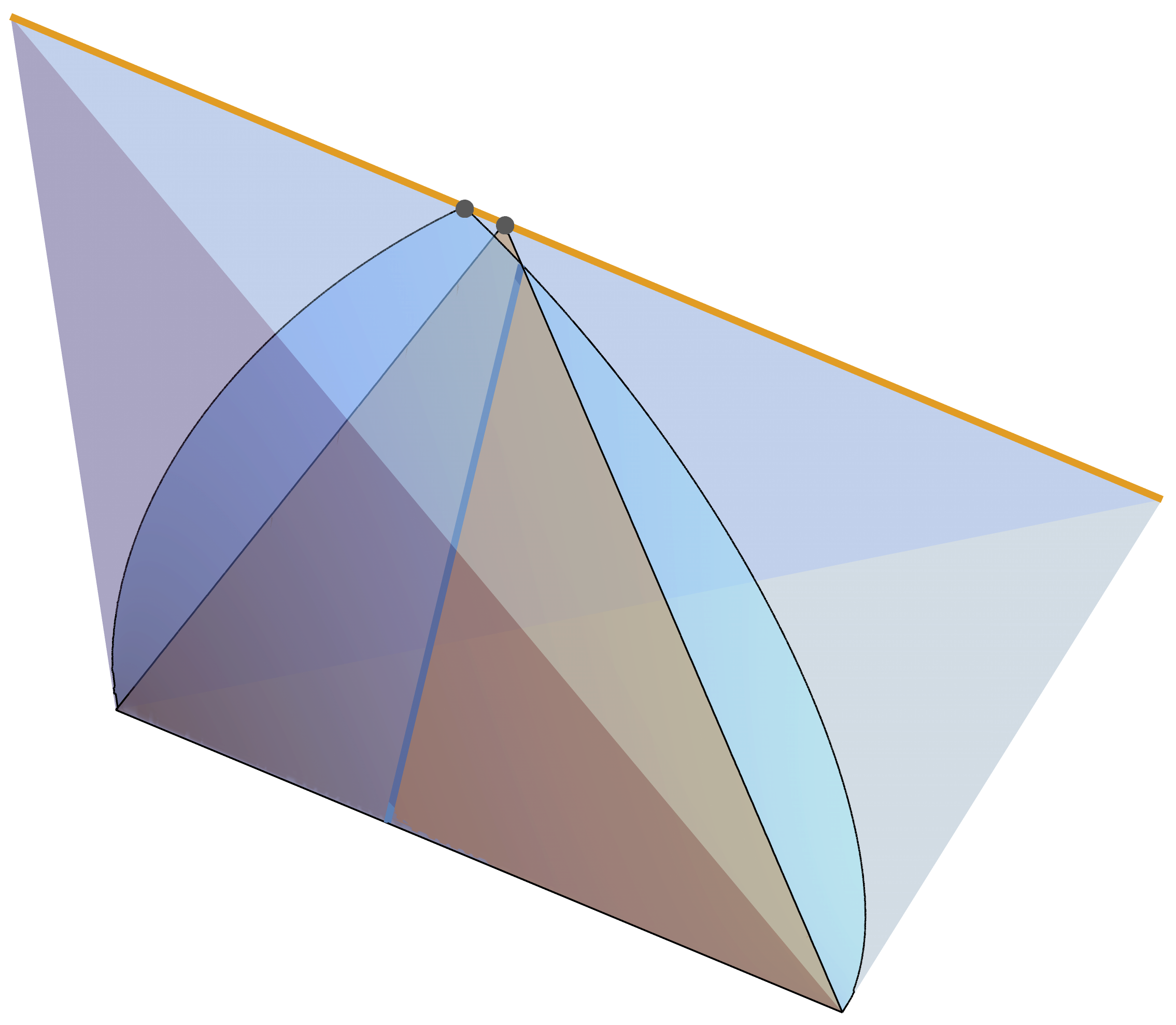}
    \caption{The polyhedral set of parameters for which the POVM is $s-tight$. The intersection with a plane (orange triangle) indicates the set of parameters for which the  POVM is morphophoric while the intersection with a blue surface indicates the set of parameters for which it is tight IC. The blue segment is the intersection of these two surfaces and depicts the cases in which the POVM in question is both morphophoric and tight IC. The orange segment at the edge of the polyhedron corresponds to rank 1 POVMs. Two distinguished points on this segment correspond to rank 1 tight IC and rank 1 morphophoric POVMs, respectively. It is clear that in this family of POVMs there is no possibility to obtain a POVM that is rank 1, tight IC and morphophoric at the same time.}
    \label{parameters}
\end{figure}

\end{Ex}
\section{Generalised Urgleichung}\label{genurg}
Our motivation in introducing $s$-tight IC measurements was to provide a tool that would give us a simple recipe for recovering a state from the statistics of the measurement outcomes. And if we can easily recover the state, we can also easily recover the statistics of \emph{any} measurement without actually performing it. In consequence we obtain one more generalisation of the \emph{primal equation} or \emph{Urgleihung}.
The primal equation in QBism does not only play the role of the state (or statistics) recovering formula. In general it also shows the nonclassicality of the measurement (or theory) by the visible difference from the classical Law of Total Probability (LTP). If we want to interpret this recovering formula as some  generalisation of the LTP, we need to know what happens to the system after measurement so that we can calculate the conditional probabilities. Thus we need to define a measurement instrument:

\begin{Df}
Let $\pi:= (\pi_j)^n_{j=1}$ be a measurement and let $\Lambda:=(\Lambda_j)_{j=1}^n$ be a family of affine maps from
$B$ to $C$. We say that $\Lambda$ is an {\bf instrument} for $\pi$ if $\pi_j (x) = e(\Lambda_j (x))$ for every pre-measurement state
$x\in B$ and for all $j = 1,\ldots,n$. We assume that the {\bf post-measurement state} is given by $\Lambda_j (x)/e(\Lambda_j (x))$,
supposing that the result of measurement was actually $j$ (and so $\pi_j (x)\neq 0$). We say that the instrument is {\bf balanced at $m$} if $\Lambda_j(m)=\mu v_j=\pi_j(m)w_j$. In particular, if $\Lambda_j(x)=\pi_j(x)w_j$ for every $x\in B$, we call such instrument {\bf canonical}.
\end{Df}

If we perform two measurements one after another, we can express the conditional probabilities of their results by using the measurement instrument for the first one. Namely, if we denote by $p^{\xi|\pi}_{k|j}(x)$ the conditional probability that the result of the measurement $\xi=(\xi_k)_{k=1}^{n'}$ is $k$ given that we measured first $j$ with the measurement $\pi=(\pi_j)_{j=1}^n$ performed on the initial state $x$, then $p^{\xi|\pi}_{k|j}(x)=\xi_k(\Lambda_j(x)/e(\Lambda_j(x))$ for $j=1,\ldots,n'$ and $k=1,\ldots,n$.

The original QBist approach is to consider the conditional probabilities coming from a \emph{conditional state preparator}, i.e. an instrument of the form $\Lambda_j(x)=\pi_j(x)y_j$ for some arbitrary ensemble $(y_j)_{j=1}^n$ of states (such measurement -- measurement instrument setup is called a \emph{reference device}). However, it is worth mentioning that despite initially considering this general form of conditional state preparator, in the QBism-related papers the authors usually quickly restrict their attention to a special case: the \emph{canonical}  instrument, which used to be justified by the `aesthetic reasons' and then by some optimality criterion \cite[Section 2.5]{Fuc23}. But since the canonical instruments are also special cases of another, much more general family: instruments \emph{balanced at} $m$,  it makes sense to consider for the generalised primal equations the conditional probabilities coming from an instrument that is balanced at $m$ rather than from the general conditional state preparator. When we put it this way, it turns out that the only conditional probabilities needed to derive the generalised primal equation are the ones for the initial state $m$. What is more,  the `balanced at $m$' condition guarantees that the probabilities for the initial state $m$ follow the classical law of total probability \cite[Eq. (26)]{SzySlo23}:
\begin{equation}\label{balancedatm}
    p_k^\xi(m)=\sum_{j=1}^n p_j^\pi(m) p_{k|j}^{\xi|\pi}(m).
\end{equation} All this reveals a consequence of one of the main differences between these two approaches: with general conditional state preparator and with an instrument balanced at $m$. In the latter case the post-measurement states do not depend on the initial state of the system, creating an illusion of imitating the law of total probability\footnote{The more detailed discussion can be found in \cite[Section 6.2]{SzySlo23}}.   

We start with deriving the generalised primal equation first for $s$-tight IC  and then for tight IC measurements under the assumption that the measurement instrument is balanced at $m$. We begin with the operator form of the equations and then translate them in the form of corollaries into probabilistic language, that should be easier to interpret.

\begin{Th}[Generalised primal equation -- $s$-tight IC case]\label{urgthm} Let $\pi=(\pi_j)_{j=1}^n$ be a measurement in a supra-dual GGPT with an
instrument $\Lambda$ balanced at $m$. Then the following conditions are equivalent:
\begin{enumerate}[i.]
    \item $\pi$ is an $s$-tight IC measurement with scales $(s_j)_{j=1}^n$
    \item for every measurement $\xi=(\xi_k)_{k=1}^{n'}$ \begin{equation}\label{urgscal}
        \delta_\xi=\frac{1}{\alpha\mu}\mathsf C\delta_{\pi}
    \end{equation}
    holds, where $\delta_\xi=\xi\circ P_0$, $\delta_\pi=\pi\circ P_0$,  $\mathsf C_{kj}=s_j^2(\xi_k(\Lambda_j(m))-\pi_j(m)\xi_k(m))$, and  $$\alpha=\frac{1}{\dim V_0}\sum_{j=1}^ns_j^2\left(\|\pi_j\|^2-\frac{1}{\mu}(\pi_j(m))^2\right)$$
    \item for some informationally complete measurement  $\xi=(\xi_k)_{k=1}^{n'}$ and some $\alpha>0$ $$\delta_\xi=\frac{1}{\alpha\mu}\mathsf C\delta_{\pi}$$
    holds, where  $\delta_\xi$, $\delta_\pi$ and  $\mathsf C_{kj}$  are as above.
\end{enumerate}
\end{Th}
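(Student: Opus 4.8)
The plan is to establish the chain $i.\Rightarrow ii.\Rightarrow iii.\Rightarrow i.$, where the implication $ii.\Rightarrow iii.$ is essentially free: condition $ii.$ gives the equation for \emph{every} measurement $\xi$, in particular for any informationally complete one, so it suffices to observe that IC measurements exist (e.g.\ $\pi$ itself, being $s$-tight IC, is IC). The substance lies in the other two implications, and the key technical object is the frame-theoretic reading of the measurement map: recall from the excerpt that $\pi$ is the analysis operator for $(v_j)_{j=1}^n$, that $\langle P_0(v_j),P_0(x)\rangle=\pi_j(P_0(x))$, and that $\mathcal P_0\circ T=T\circ P_0$, so all statements can be transported between $V_0$ and $V_0^*$ freely.

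For $i.\Rightarrow ii.$ I would argue directly in $V_0$. Assume $\pi$ is $s$-tight IC with scales $(s_j)$, so $(s_j P_0(v_j))_{j=1}^n$ is a tight frame for $V_0$ with frame bound $\alpha$ (the value of $\alpha$ being forced by the trace formula \eqref{traceformula}, as recorded in the Remark after the definition of $s$-tight IC). Tightness means $\sum_j s_j^2\langle y,P_0(v_j)\rangle P_0(v_j)=\alpha y$ for all $y\in V_0$; equivalently $\frac{1}{\alpha}\bigl(\frac{s_j^2}{1}P_0(v_j)\bigr)_j$ is the canonical dual frame, so for any $y\in V_0$ one has the reconstruction $y=\frac{1}{\alpha}\sum_j s_j^2\,\pi_j(y)\,P_0(v_j)$. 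Now fix a measurement $\xi$ and a state $x\in B$; apply $\xi_k$ to $y=P_0(x)$ after reconstruction to get $\xi_k(P_0(x))=\frac{1}{\alpha}\sum_j s_j^2\pi_j(P_0(x))\,\xi_k(P_0(v_j))$. The task is then purely bookkeeping: rewrite $\pi_j(P_0(x))=\delta_\pi(x)_j$, rewrite $\xi_k(P_0(v_j))$ using $P_0(v_j)=v_j-e(v_j)m=\frac1\mu\bigl(\Lambda_j(m)-\pi_j(m)m\bigr)$ — here the balanced-at-$m$ hypothesis $\Lambda_j(m)=\mu v_j$ enters — so that $\xi_k(P_0(v_j))=\frac1\mu\bigl(\xi_k(\Lambda_j(m))-\pi_j(m)\xi_k(m)\bigr)=\frac1\mu\,\mathsf C_{kj}/s_j^2$, and collect the $s_j^2$ factors to land on $\delta_\xi(x)=\frac{1}{\alpha\mu}(\mathsf C\,\delta_\pi(x))_k$. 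Since $x\in B$ was arbitrary and $\delta_\xi,\delta_\pi$ are affine maps agreeing on $B$, \eqref{urgscal} follows as an identity of maps.

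For $iii.\Rightarrow i.$ I would run this argument in reverse. Suppose $\delta_\xi=\frac{1}{\alpha\mu}\mathsf C\,\delta_\pi$ holds for some IC measurement $\xi$ and some $\alpha>0$. Unwinding the definition of $\mathsf C$ and of $\delta_\xi$ exactly as above, the identity reads $\xi_k(P_0(x))=\frac{1}{\alpha}\sum_j s_j^2\,\pi_j(P_0(x))\,\xi_k(P_0(v_j))$ for all $x\in B$, hence for all arguments in $V_0$ by affineness/linearity, i.e.\ $\xi_k\bigl(z-\frac1\alpha\sum_j s_j^2\langle z,P_0(v_j)\rangle P_0(v_j)\bigr)=0$ for every $z\in V_0$ and every $k$. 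Because $\xi$ is IC, the functionals $\mathcal P_0(\xi_k)$ span $V_0^*$ (Theorem on IC measurements), so the $V_0$-vector inside the bracket must vanish for all $z$; that is exactly $\sum_j s_j^2\langle z,P_0(v_j)\rangle P_0(v_j)=\alpha z$, the statement that $(s_j P_0(v_j))_j$ is a tight frame for $V_0$, i.e.\ that $\pi$ is $s$-tight IC. (One should note that IC of $\xi$ forces $V_0\ne\{0\}$ or handle that degenerate case trivially, and that once tightness is known, the trace formula pins $\alpha$ to the advertised value.)

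The main obstacle I anticipate is not conceptual but notational: keeping the three layers — effects $\pi_j\in V^*$, the dual vectors $v_j\in V$, the $P_0$-projections, plus the two instruments' worth of data $\Lambda_j(m)$ — in sync while converting the clean frame identity into the matrix $\mathsf C$ with its idiosyncratic $s_j^2$ and $-\pi_j(m)\xi_k(m)$ terms. The one genuine subtlety to get right is the role of the balanced-at-$m$ assumption: it is precisely what lets $\Lambda_j(m)$ stand in for $\mu v_j$, turning $\xi_k(P_0(v_j))$ into the prescribed combination of $\xi_k(\Lambda_j(m))$ and $\xi_k(m)$; without it the maps $\Lambda_j$ on a general state would intrude and the equation would not close. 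I would also double-check that the correction term $-\pi_j(m)\xi_k(m)$ is exactly $-\pi_j(m)\,\xi_k(P_0 m + (e(m))m$-part$)$ bookkeeping, i.e.\ that $\xi_k(m)$ appears and not $\xi_k(P_0 m)=0$, which it does because $P_0(v_j)$ subtracts $e(v_j)m$, not $P_0 m$.
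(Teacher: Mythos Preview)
Your proof is correct and follows essentially the same route as the paper's: both rewrite the generalised primal equation as the frame-operator identity $(D_s\delta_\pi)^*(D_s\delta_\pi)=\alpha P_0$ composed with $\xi$, use the balanced-at-$m$ hypothesis to identify $\mathsf C_{kj}$ with $\mu s_j^2\,\xi_k(P_0(v_j))$, and for $iii.\Rightarrow i.$ exploit injectivity of the IC measurement $\xi$ (you via separation of points by the $\mathcal P_0(\xi_k)$, the paper via the pseudoinverse $\xi^+$). The only difference is packaging --- the paper stays at the operator level with analysis/synthesis maps while you unpack the same identities elementwise through the tight-frame reconstruction formula.
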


\begin{proof}
Let us first observe that $\delta_\pi=\pi\circ P_0$ is the analysis operator for $(P_0(v_j))_{j=1}^n$. Indeed, for $x\in V$ we have $(\delta_\pi(x))_j=\pi_j(P_0(x))=\langle v_j,P_0(x)\rangle=\langle P_0(v_j),x\rangle$. Thus $D_s\delta_\pi$ is the analysis operator for $(s_jP_0(v_j))_{j=1}^n$. Obviously, $\xi$ is the analysis operator for $(T^{-1}(\xi_k))_{k=1}^{n'}$. Now,
\begin{align*}
    (\xi\delta_\pi^*)_{kj}&=\langle T^{-1}(\xi_k),P_0(v_j)\rangle=\xi_k(P_0(v_j))=\xi_k(v_j)-e(v_j)\xi_k(m)=\frac{1}{\mu}\xi_k(\Lambda_j(m))-\frac{1}{\mu}\pi_j(m)\xi_k(m)
\end{align*}
In consequence, $\mathsf{C}=\mu(\xi\delta_\pi^*)D_s^2$ and formula (\ref{urgscal}) can be written equivalently as \begin{equation}\label{urgscal2}\xi P_0=\frac{1}{\alpha}\xi\delta_\pi^*D_s^2\delta_\pi=\frac{1}{\alpha}\xi(D_s\delta_\pi)^*(D_s\delta_\pi).\end{equation}
Now, condition i. can be equivalently expressed as $(D_s\delta_\pi)^*(D_s\delta_\pi)=\alpha P_0$. Thus, by applying $\xi$ on both sides we obtain the implication i. $\implies$ ii. The implication ii. $\implies$ iii. is obvious and finally the implication iii.$\implies$ i. is obtained by applying the pseudoinverse $\xi^+$ on both sides of equation \ref{urgscal2}.
\end{proof}

\begin{Cor}[Probabilistic version] Let $\pi=(\pi_j)_{j=1}^n$ be an $s$-tight IC  measurement with scales $(s_j)_{j=1}^n$ in a supra-dual GGPT with an instrument $\Lambda$ balanced at $m$, and let $\xi=(\xi_k)_{k=1}^{n'}$ be an arbitrary measurement. The equation (\ref{urgscal}) can be written in probabilistic terms (we put $p_j^\pi:=\pi_j$ and $p_k^\xi:=\xi_k$) as
$$p_k^\xi(x)-p_k^\xi(m)=\dim V_0\cdot \frac{\sum_{j=1}^ns_j^2p_j^\pi(m)\left(p_{k|j}^{\xi|\pi}(m)-p_k^\xi(m)\right)\left(p_j^\pi(x)-p_j^\pi(m)\right)}{\sum_{j=1}^ns_j^2p_j^\pi(m)\left(p_{j|j}^{\pi|\pi}(m)-p_j^\pi(m)\right)}	$$
for $k=1,\ldots,k'$ and $x\in B$.
\end{Cor}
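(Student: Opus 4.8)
The plan is to derive the coordinatewise probabilistic identity from the operator equation (\ref{urgscal}) of Theorem~\ref{urgthm} by evaluating everything on an arbitrary state $x\in B$ and re-reading each ingredient in probabilistic language. First I would apply both sides of $\delta_\xi=\frac{1}{\alpha\mu}\mathsf{C}\delta_\pi$ to $x$. Since $e(x)=1$ we have $P_0(x)=x-m$, so $(\delta_\xi(x))_k=\xi_k(x-m)=p_k^\xi(x)-p_k^\xi(m)$ and $(\delta_\pi(x))_j=p_j^\pi(x)-p_j^\pi(m)$; this already produces the left-hand side of the asserted formula together with the factors $p_j^\pi(x)-p_j^\pi(m)$ appearing in the numerator.

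Next I would rewrite the entries of $\mathsf{C}$. Because the instrument $\Lambda$ is balanced at $m$, $\Lambda_j(m)=\pi_j(m)w_j$, hence $e(\Lambda_j(m))=\pi_j(m)$ and $\xi_k(\Lambda_j(m))=\pi_j(m)\,\xi_k(w_j)=\pi_j(m)\,\xi_k\!\big(\Lambda_j(m)/e(\Lambda_j(m))\big)=p_j^\pi(m)\,p_{k|j}^{\xi|\pi}(m)$ by the definition of the conditional probability at the initial state $m$. Substituting this into $\mathsf{C}_{kj}=s_j^2\big(\xi_k(\Lambda_j(m))-\pi_j(m)\xi_k(m)\big)$ gives $\mathsf{C}_{kj}=s_j^2\,p_j^\pi(m)\big(p_{k|j}^{\xi|\pi}(m)-p_k^\xi(m)\big)$, which is exactly the $j$-th summand in the numerator of the claim.

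It then remains to recognise the scalar $\frac{1}{\alpha\mu}$ as $\dim V_0$ divided by the announced denominator. Using $T^{-1}(e)=\frac{1}{\mu}m$ one gets $e(v_j)=\frac{1}{\mu}\langle v_j,m\rangle=\frac{1}{\mu}\pi_j(m)$, and the isometry of $T$ gives $\|\pi_j\|^2=\|v_j\|^2=e(v_j)\,\pi_j(w_j)$; combining these, $\mu\|\pi_j\|^2-(\pi_j(m))^2=\pi_j(m)\big(\pi_j(w_j)-\pi_j(m)\big)=p_j^\pi(m)\big(p_{j|j}^{\pi|\pi}(m)-p_j^\pi(m)\big)$, where $\pi_j(w_j)=\pi_j\!\big(\Lambda_j(m)/e(\Lambda_j(m))\big)=p_{j|j}^{\pi|\pi}(m)$. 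Summing over $j$ and comparing with the expression for $\alpha$ in Theorem~\ref{urgthm} yields $\alpha\mu\dim V_0=\sum_{j=1}^n s_j^2\,p_j^\pi(m)\big(p_{j|j}^{\pi|\pi}(m)-p_j^\pi(m)\big)$. Plugging the three computed pieces into (\ref{urgscal}) and reading off the $k$-th coordinate produces the stated identity for every $k$ and every $x\in B$.

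I do not foresee a genuine obstacle: the corollary is essentially a dictionary translation of Theorem~\ref{urgthm}, and the only place demanding care is the bookkeeping of the normalisation constants, in particular verifying that the $\mu$ coming from $e(v_j)=\pi_j(m)/\mu$ cancels correctly against the $\mu$ in the prefactor $\frac{1}{\alpha\mu}$, so that the denominator emerges as $\alpha\mu\dim V_0$ rather than $\alpha\dim V_0$ or $\alpha\mu^2\dim V_0$.
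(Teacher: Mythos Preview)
Your proposal is correct and is exactly the dictionary translation the paper has in mind; indeed the paper states the corollary without proof, treating it as immediate from Theorem~\ref{urgthm}, and each of your three steps (evaluating $\delta_\xi,\delta_\pi$ at $x\in B$, rewriting $\mathsf{C}_{kj}$ via the balanced-at-$m$ condition, and expressing $\alpha\mu\dim V_0$ through $\|\pi_j\|^2=e(v_j)\pi_j(w_j)=\frac{\pi_j(m)}{\mu}\pi_j(w_j)$) is the expected unwinding of the definitions. The bookkeeping of the $\mu$ factors you flag as the only delicate point is handled correctly.
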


One can argue that unlike in the morphophoric case, this `probabilistic' version  is not purely probabilistic in the sense that it involves some arbitrary factors $s_j$ that cannot be interpreted in the probabilistic way. Thus a natural question arises: can we obtain a nice, purely probabilistic formula for some class of $s$-tight IC measurements other than morphophoric ones? The answer is yes and it is easy to guess that one should consider $s$-tight IC measurements with scales $s_j=1/\sqrt{p_j^\pi(m)}$, i.e. the tight IC measurements.

\begin{Th}[Generalised primal equation -- tight IC case]\label{urgtight} Let $\pi=(\pi_j)_{j=1}^n$ be a measurement in a supra-dual GGPT with an
instrument $\Lambda$ balanced at $m$. TFAE:
\begin{enumerate}[i.]
    \item $\pi$ is a tight IC measurement
    \item for every measurement $\xi=(\xi_k)_{k=1}^{n'}$ \begin{equation}\label{urgtighteq}
        \delta_\xi=\frac{1}{\alpha\mu}\mathsf K\delta_{\pi}
    \end{equation}
    holds, where $\delta_\xi=\xi\circ P_0$, $\delta_\pi=\pi\circ P_0$,  $\mathsf K_{kj}=\xi_k(\Lambda_j(m)/e(\Lambda_j(m))=\xi_k(w_j)$, and $$\alpha=\frac{1}{\mu\dim V_0}\left(\sum_{j=1}^n\pi_j(w_j)-1\right)$$
    \item for some informationally complete measurement $\xi=(\xi_k)_{k=1}^{n'}$ and some $\alpha>0$ $$\delta_\xi=\frac{1}{\alpha\mu}\mathsf K\delta_{\pi}$$
    holds, where  $\delta_\xi$, $\delta_\pi$ and  $\mathsf K_{kj}$ are as above.
\end{enumerate}
\end{Th}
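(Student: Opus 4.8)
The plan is to obtain the present theorem as the specialisation of Theorem~\ref{urgthm} to the scales $s_j=1/\sqrt{\pi_j(m)}$, since — by the definition adopted just above — this is precisely what it means for $\pi$ to be tight IC. With this choice item~i.\ of Theorem~\ref{urgthm} turns verbatim into item~i.\ here, so the whole task reduces to checking that, for these scales, (a) the coefficient matrix $\mathsf C$ of~(\ref{urgscal}) may be replaced by the matrix $\mathsf K$ of~(\ref{urgtighteq}) without changing the equation, and (b) the frame bound $\alpha$ of Theorem~\ref{urgthm} collapses to the value displayed in~(\ref{tightICbound}). Once this is done, items~ii.\ and~iii.\ here are literally items~ii.\ and~iii.\ of Theorem~\ref{urgthm}, and the three-way equivalence is inherited.

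First I would settle (a). Substituting $s_j^2=1/\pi_j(m)$ into $\mathsf C_{kj}=s_j^2\big(\xi_k(\Lambda_j(m))-\pi_j(m)\xi_k(m)\big)$ and using that $\Lambda$ is balanced at $m$, i.e.\ $\Lambda_j(m)=\pi_j(m)w_j$ and hence $e(\Lambda_j(m))=\pi_j(m)$, gives $\mathsf C_{kj}=\xi_k(w_j)-\xi_k(m)=\mathsf K_{kj}-\xi_k(m)$. The discrepancy $\xi_k(m)$ is independent of $j$, so $\mathsf K-\mathsf C=a\,\mathbf 1^{\!\top}$ with $a=(\xi_k(m))_k$ and $\mathbf 1=(1,\dots,1)^{\!\top}\in\mathbb R^n$. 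But $\mathbf 1^{\!\top}\delta_\pi$, viewed as a linear map $V\to\mathbb R$, sends $x$ to $\sum_{j=1}^n\pi_j(P_0 x)=e(P_0 x)=0$ because $P_0 x\in V_0=e^{-1}(0)$; hence $(\mathsf K-\mathsf C)\delta_\pi=0$ and equations~(\ref{urgscal}) and~(\ref{urgtighteq}) are the same equation.

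Next I would settle (b). Plugging $s_j^2=1/\pi_j(m)$ into $\alpha=\frac{1}{\dim V_0}\sum_j s_j^2\big(\|\pi_j\|^2-\tfrac1\mu(\pi_j(m))^2\big)$ and using $\sum_j\pi_j(m)=e(m)=1$ yields $\alpha=\frac{1}{\dim V_0}\big(\sum_j\|\pi_j\|^2/\pi_j(m)-\tfrac1\mu\big)$. Since $T$ is an isometry $\|\pi_j\|^2=\|v_j\|^2$, and from $w_j=v_j/e(v_j)$ with $e(v_j)=\pi_j(m)/\mu$ one gets $\pi_j(w_j)=\langle v_j,w_j\rangle=\|v_j\|^2/e(v_j)=\mu\|\pi_j\|^2/\pi_j(m)$; substituting this back produces exactly $\alpha=\frac{1}{\mu\dim V_0}\big(\sum_j\pi_j(w_j)-1\big)$, matching the statement and~(\ref{tightICbound}).

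I do not expect a genuine obstacle: the only place asking for attention is step~(a), namely recognising that each row of the coefficient matrix may be altered by an arbitrary multiple of $\mathbf 1^{\!\top}$ for free — because $\delta_\pi$ already takes values in the hyperplane $\{\gamma\in\mathbb R^n:\sum_j\gamma_j=0\}$ — together with keeping careful track of the normalisations $e(v_j)=\pi_j(m)/\mu$ and $e(\Lambda_j(m))=\pi_j(m)$ used to express everything through the states $w_j$. Alternatively, one could run the proof of Theorem~\ref{urgthm} line by line with $D_s=D_{1/\sqrt{\pi(m)}}$, computing $\xi\,\delta_\pi^{*}D_s^{2}=\tfrac1\mu(\mathsf K-a\,\mathbf 1^{\!\top})$ directly; this is longer and offers no extra insight.
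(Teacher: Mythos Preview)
Your proof is correct. The overall plan — specialise Theorem~\ref{urgthm} to $s_j=1/\sqrt{\pi_j(m)}$ and show that for these scales $\mathsf C\delta_\pi=\mathsf K\delta_\pi$ — is exactly what the paper does, and your handling of the frame bound in~(b) simply reproduces the derivation of~(\ref{tightICbound}).

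Where you differ from the paper is in step~(a). The paper establishes $\mathsf C\delta_\pi=\mathsf K\delta_\pi$ via the operator identities built up in the proof of Theorem~\ref{urgthm} together with Proposition~\ref{tightproperty}: it rewrites $\mathsf C\delta_\pi=\mu\xi\delta_\pi^{*}D_s^{2}\delta_\pi$, uses $S=S_0+\tfrac1\mu P_m$ to replace $\delta_\pi^{*}D_s^{2}\delta_\pi$ by $\pi^{*}D_s^{2}\delta_\pi$, and then identifies $\mu\xi\pi^{*}D_s^{2}$ with $\mathsf K$. Your route is more elementary: you compute $\mathsf C_{kj}$ and $\mathsf K_{kj}$ directly, observe that their difference $\xi_k(m)$ is constant in $j$, and kill it with the one-line fact $\mathbf 1^{\!\top}\delta_\pi=e\circ P_0=0$. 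This bypasses Proposition~\ref{tightproperty} entirely and makes transparent \emph{why} the simplification from $\mathsf C$ to $\mathsf K$ is possible — namely, the $j$-independent additive freedom in each row that $\delta_\pi$ cannot see. The paper's approach, on the other hand, ties the result more visibly to the structural lift property of tight IC measurements recorded in Proposition~\ref{tightproperty}.
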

\begin{proof}
It suffices to show that for $s_j=1/\sqrt{\pi_j(m)}$, $j=1,\ldots,n$ we can express $\mathsf C\delta_\pi$ from Theorem \ref{urgthm} as $\mathsf K\delta_\pi$. By the observations made in the proof of Theorem \ref{urgthm} and by Proposition \ref{tightproperty} we obtain
$$\mathsf C\delta_\pi=\mu\xi\delta_\pi^*D_s^2\delta_\pi=\mu\xi(\pi^*D_s^2\pi-\frac{1}{\mu}P_m)P_0=\mu\xi\pi^*D_s^2\delta_\pi.$$
But $(\xi\pi^*)_{kj}=\xi_k(v_j)=\frac{1}{\mu}\pi_j(m)\xi_k(w_j)$, and thus $\mu\xi\pi^*D_s^2=\mathsf K$ and, in consequence, $\mathsf C\delta_\pi=\mathsf K\delta_\pi$.
\end{proof}

\begin{Cor}[probabilistic version] Let $\pi=(\pi_j)_{j=1}^n$ be a tight IC measurement  in a supra-dual GGPT with an instrument $\Lambda$ balanced at $m$, and let $\xi=(\xi_k)_{k=1}^{n'}$ be an arbitrary measurement. The equation (\ref{urgtighteq}) can be written in probabilistic terms (we put $p_j^\pi:=\pi_j$ and $p_k^\xi:=\xi_k$) as
$$p_k^\xi(x)-p_k^\xi(m)=\dim V_0\cdot\frac{\sum_{j=1}^np_{k|j}^{\xi|\pi}(m)(p_j^\pi(x)-p_j^\pi(m))}{\sum_{j=1}^n(p_{j|j}^{\pi|\pi}(m)-p_j(m))}$$
\end{Cor}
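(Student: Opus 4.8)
The plan is to take the operator identity (\ref{urgtighteq}) supplied by Theorem \ref{urgtight}, evaluate it at an arbitrary state $x\in B$, read off the $k$-th coordinate, and then recognise each quantity occurring in it as one of the probabilities named in the statement. First I would record the two elementary facts about $\delta_\xi$ and $\delta_\pi$: since for $x\in B$ we have $e(x)=1$ and hence $P_0(x)=x-m$, the $k$-th coordinate of $\delta_\xi(x)=\xi(P_0(x))$ is $\xi_k(x)-\xi_k(m)=p_k^\xi(x)-p_k^\xi(m)$, and likewise the $j$-th coordinate of $\delta_\pi(x)$ is $p_j^\pi(x)-p_j^\pi(m)$.

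Next I would identify the entries of $\mathsf K$ with conditional probabilities. Because $\Lambda$ is balanced at $m$ we have $\Lambda_j(m)=\pi_j(m)w_j$, whence $e(\Lambda_j(m))=\pi_j(m)$ and $\Lambda_j(m)/e(\Lambda_j(m))=w_j$; therefore $\mathsf K_{kj}=\xi_k\big(\Lambda_j(m)/e(\Lambda_j(m))\big)$ is, by the very definition of the conditional probabilities coming from an instrument, exactly $p_{k|j}^{\xi|\pi}(m)$, and in particular $\pi_j(w_j)=p_{j|j}^{\pi|\pi}(m)$. I would then rewrite the constant $\alpha$ from Theorem \ref{urgtight}: using $\sum_{j=1}^n\pi_j(m)=e(m)=1$ one gets $\sum_{j=1}^n\pi_j(w_j)-1=\sum_{j=1}^n\big(\pi_j(w_j)-\pi_j(m)\big)=\sum_{j=1}^n\big(p_{j|j}^{\pi|\pi}(m)-p_j(m)\big)$, so that $\tfrac{1}{\alpha\mu}=\dim V_0\big/\sum_{j=1}^n\big(p_{j|j}^{\pi|\pi}(m)-p_j(m)\big)$.

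Finally I would substitute all of this into the $k$-th coordinate of $\delta_\xi=\tfrac{1}{\alpha\mu}\mathsf K\delta_\pi$ evaluated at $x\in B$, namely $p_k^\xi(x)-p_k^\xi(m)=\tfrac{1}{\alpha\mu}\sum_{j=1}^n\mathsf K_{kj}\big(p_j^\pi(x)-p_j^\pi(m)\big)$, which is precisely the asserted formula (valid for each $k$ and each $x\in B$). There is no real obstacle here: the entire content is the bookkeeping of the identifications above. The only points that call for a moment's care are the use of $e(x)=1$ on $B$ to replace $P_0(x)$ by $x-m$, the use of the `balanced at $m$' hypothesis to turn $\mathsf K_{kj}$ into the conditional probability $p_{k|j}^{\xi|\pi}(m)$, and the normalisation $\sum_{j=1}^n\pi_j(m)=1$ when simplifying $\alpha$.
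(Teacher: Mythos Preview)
Your proof is correct and is exactly the intended derivation: the paper states this Corollary without a separate proof, treating it as the immediate probabilistic unpacking of the operator identity in Theorem~\ref{urgtight}, and the steps you spell out (evaluating $\delta_\xi,\delta_\pi$ at $x\in B$ via $P_0(x)=x-m$, identifying $\mathsf K_{kj}=p_{k|j}^{\xi|\pi}(m)$ through the `balanced at $m$' hypothesis, and rewriting $\alpha$ using $\sum_j\pi_j(m)=1$) are precisely those details.
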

It is worth to notice that the Urgleichung for tight IC measurements (both the general form and the purely probabilistic one) is much simpler than one could expect by simply replacing $s_j^2$ with $1/\pi_j(m)$.

We have shown previously \cite[Proposition 29]{SzySlo23}  that for unbiased morphophoric measurements and canonical instruments the Urgleichung takes the form that can be translated into the one reminding the total probability law \cite[Remark 13]{SzySlo23} (and indeed reduces to it for classical probability theory). It turns out however, that this particular form is not related to the unbiasedness of the measurement but rather to its tightness (note that unbiased morphophoric measurements are also tight IC, see Remark \ref{tightmorph}):
\begin{Th}[Generalised primal equation -- canonical instrument]
Let $\Lambda$ be the canonical instrument for a measurement $\pi$ in a supra-dual GGPT. Then the following conditions are equivalent:
\begin{enumerate}[i.]
    \item $\pi$ is a tight IC measurement
    \item for every measurement $\xi=(\xi_k)_{k=1}^{n'}$ \begin{equation}\label{LTP}
        p_k^\xi(x)-p_k^\xi(m)=A\sum_{j=1}^np_{k|j}^{\xi|\pi}(x)(p_j^\pi(x)-p_j^\pi(m))
    \end{equation}
    holds for $x\in B$ and $k=1,\ldots,n'$ with $A=\frac{1}{\alpha\mu}$
    \item for some informationally complete measurement $\xi=(\xi_k)_{k=1}^{n'}$ $$p_k^\xi(x)-p_k^\xi(m)=A\sum_{j=1}^np_{k|j}^{\xi|\pi}(x)(p_j^\pi(x)-p_j^\pi(m))$$
    holds for $x\in B$ and $k=1,\ldots,n'$ with $A=\frac{1}{\alpha\mu}$
\end{enumerate}
\end{Th}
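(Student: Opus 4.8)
The plan is to reduce the whole statement to Theorem~\ref{urgtight} and its probabilistic corollary, exploiting the one feature that singles out the canonical instrument: the post-measurement state does not depend on the input state. First I would record that the canonical instrument is indeed balanced at $m$, since $\Lambda_j(m)=\pi_j(m)w_j=\mu v_j$, so that Theorem~\ref{urgtight} and its corollary apply verbatim and condition i.\ of the present theorem is literally condition i.\ of Theorem~\ref{urgtight}. The crucial observation is then that for the canonical instrument $e(\Lambda_j(x))=\pi_j(x)e(w_j)=\pi_j(x)$, hence the post-measurement state equals $\Lambda_j(x)/e(\Lambda_j(x))=w_j$ whenever $\pi_j(x)\neq 0$, independently of $x\in B$. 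Defining $p_{k|j}^{\xi|\pi}(x):=\xi_k(w_j)$ also on the states with $\pi_j(x)=0$ (the continuous extension of $\Lambda_j(x)/e(\Lambda_j(x))$ to $w_j$), one gets $p_{k|j}^{\xi|\pi}(x)=\xi_k(w_j)=p_{k|j}^{\xi|\pi}(m)$ for all $x\in B$ and all $j,k$. Consequently the right-hand side of (\ref{LTP}) collapses to $A\sum_{j=1}^n p_{k|j}^{\xi|\pi}(m)\bigl(p_j^\pi(x)-p_j^\pi(m)\bigr)$, which is exactly the right-hand side of the probabilistic version of (\ref{urgtighteq}), up to the scalar.

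Next I would match the two normalising constants. The probabilistic corollary of Theorem~\ref{urgtight} carries the factor $\dim V_0\big/\sum_{j=1}^n\bigl(p_{j|j}^{\pi|\pi}(m)-p_j^\pi(m)\bigr)$; using $p_{j|j}^{\pi|\pi}(m)=\pi_j(w_j)$ together with $\sum_{j=1}^n p_j^\pi(m)=\sum_{j=1}^n\pi_j(m)=e(m)=1$, this denominator equals $\sum_{j=1}^n\pi_j(w_j)-1$, which by (\ref{tightICbound}) is $\mu\,\dim V_0\cdot\alpha$; hence the factor is precisely $A=1/(\alpha\mu)$. With this bookkeeping done, the three implications become routine: for i.$\implies$ii., if $\pi$ is tight IC then condition ii.\ of Theorem~\ref{urgtight} in its probabilistic form, rewritten by means of the two observations above, is exactly (\ref{LTP}) for every $\xi$; the implication ii.$\implies$iii.\ is trivial; and for iii.$\implies$i., given (\ref{LTP}) for some informationally complete $\xi$ and some constant, the collapse identity turns it back into the probabilistic form of (\ref{urgtighteq}) for that $\xi$, so $\pi$ is tight IC by condition iii.\ of Theorem~\ref{urgtight}, after which the constant is forced to be $1/(\alpha\mu)$ by the computation just described.

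I do not expect a genuine obstacle here. The only two points that need a word of care are the treatment of the pre-measurement states $x$ with $\pi_j(x)=0$, where $p_{k|j}^{\xi|\pi}(x)$ has to be given its natural value $\xi_k(w_j)$ so that the product $p_{k|j}^{\xi|\pi}(x)\bigl(p_j^\pi(x)-p_j^\pi(m)\bigr)$ is well defined, and the verification that the superficially different constants $1/(\alpha\mu)$ and $\dim V_0\big/\sum_{j=1}^n\bigl(p_{j|j}^{\pi|\pi}(m)-p_j^\pi(m)\bigr)$ coincide, which is the short identity carried out in the previous paragraph. Everything else is a direct transcription of Theorem~\ref{urgtight}.
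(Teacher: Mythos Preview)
Your proposal is correct and follows essentially the same approach as the paper: the paper's proof is the single sentence ``It follows from the probabilistic version of Theorem~\ref{urgtight} and the fact that for the canonical instrument $p_{k|j}^{\xi|\pi}(x)=const(j)$,'' and your argument unpacks exactly this, adding the verification that the canonical instrument is balanced at $m$, the identification of the constant via (\ref{tightICbound}), and a remark on the $\pi_j(x)=0$ case.
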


\begin{proof}
It follows from the probabilistic version of Theorem \ref{urgtight} and the fact that for the canonical instrument $p_{k|j}^{\xi|\pi}(x)=const(j)$.
\end{proof}

\begin{R}
Alternatively, (\ref{LTP}) can be rewritten in the following equivalent forms:
\begin{equation}
    p_k^\xi(x)=\sum_{j=1}^np_{k|j}^{\xi|\pi}(x)p_j^\pi(x)+(1-1/A)(p_k^\xi(x)-p_k^\xi(m))
    \end{equation}
or
\begin{equation}
    p_k^\xi(x)=\sum_{j=1}^np_{k|j}^{\xi|\pi}(x)(Ap_j^\pi(x)+(1-A)p_j^\pi(m))
\end{equation}
for $x\in B$ and $k=1,\ldots,n'$ with $A=\frac{1}{\alpha\mu}$ The former may be seen as the \emph{classical total probability law plus a correction term}. The latter can be derived thanks to Equation (\ref{balancedatm}) and it strikingly reminds the original QBist Generalised Urgleichung stated in the \cite[Assumption 1]{FucSch11}. The dependency on the initial state $x$ is omitted there, but in our notation it would take the following form: $p_k^\xi(x)=\sum_{j=1}^np_{k|j}^{\xi|\pi}(x)(Ap_j^\pi(x)+\beta)$. It is now easy to see that the only difference is that instead of a constant $\beta$ we have $\beta_j:=(1-A)p_j^\pi(m)$, depending only on the outcome $j$ but not on the initial state $x$.   \end{R}

\begin{R}
If the GGPT is self-dual and the tight IC measurement is additionally $\chi$-ray then it follows from Theorem \ref{rank1constants} that $A=\frac{\mu\dim V_0}{\chi}$. Let us note that this constant is exactly the same as the one for morphophoric regular measurements in self-dual GGPTs. It is obviously not surprising since the latter ones form a special subclass of $\chi$-ray tight IC, but it emphasizes once again that in the Urgleichung context, the tight IC measurements should be the ones to look for. 
\end{R}

\section{Conclusions}

A requirement that the state of the system can be fully described by the measurement statistics is easily satisfied with the informational completeness. However, in the vast landscape of IC measurements, we have managed to identify these for which the task of the state recovery is quite simple and, in consequence, for which we can deliver a generalised version of the primal equation. Most importantly, we have identified tight IC measurements as the only ones for which the generalised primal equation takes the form of the classical total probability law with a correction term, given that the measurement instrument is canonical.

\section*{Acknowledgments}

I would like to thank Wojciech S\l omczy\'nski for valuable discussions and comments.


\begin{thebibliography}{99}

\bibitem{Appetal11}Appleby, DM, Ericsson, A, Fuchs, CA, Properties of QBist
state spaces, \href{https://doi.org/10.1007/s10701-010-9458-7}{Found. Phys. 41 (2011), 564--579.}

\bibitem{Appetal17}Appleby, M, Fuchs, CA, Stacey, BC, Zhu, H, Introducing the
Qplex: a novel arena for quantum theory, \href{https://doi.org/10.1140/epjd/e2017-80024-y}{Eur. Phys. J. D 71 (2017), 197.}

\bibitem{Baretal14}Barnum, H, M\"{u}ller, MP, Ududec, C, Higher-order
interference and single-system postulates characterizing quantum theory, \href{https://doi.org/10.1088/1367-2630/16/12/123029}{New J. Phys. 16 (2014), 123029.}

\bibitem{BarWil11}Barnum, H, Wilce, A, Information processing in convex
operational theories, \href{https://doi.org/10.1016/j.entcs.2011.01.002}{Electron. Notes Theor. Comput. Sci. 270 (2011), 3--15.}

\bibitem{BarWil16}Barnum, H, Wilce, A, \href{https://doi.org/10.1007/978-94-017-7303-4_11}{Post-classical probability
theory, in: Chiribella, G, Spekkens, R (eds), Quantum Theory: Informational
Foundations and Foils. Fundamental Theories of Physics, vol 181, Springer,
Dordrecht, 2016, 367--420.}

\bibitem{Bar07}Barrett, J, Information processing in generalised probabilistic
theories, \href{https://doi.org/10.1103/PhysRevA.75.032304}{Phys. Rev. A 75 (2007), 032304.}



\bibitem{Dauetal86} Daubechies, I, Grossmann, A, Meyer, Y, Painless Nonorthogonal Expansions, in: Heil, C, Walnut, DF (eds), \href{https://doi.org/10.1515/9781400827268.372}{Fundamental Papers in Wavelet Theory, Princeton UP, Princeton, 2006, 372--384.} 



\bibitem{Dav76}Davies, EB, Quantum Theory of Open Systems, Academic Press, London, 1976.

\bibitem{DavLew70}Davies, EB, Lewis, JT, An operational approach to quantum probability, \href{https://doi.org/10.1007/BF01647093}{Comm. Math. Phys. 17 (1970), 239--260.}


\bibitem{DeBetal21}DeBrota, JB, Fuchs, CA, Pienaar, JL, Stacey, BC, Born’s rule as a quantum extension of Bayesian coherence, \href{https://doi.org/10.1103/PhysRevA.104.022207}{Phys. Rev. A 104 (2021), 022207.}

\bibitem{Fuc23}Fuchs, CA, QBism, where next?, 2023, \href{ 	
https://doi.org/10.48550/arXiv.2303.01446}{arxiv:2303.01446 [quant-ph]}

\bibitem{Fucetal22}Fuchs, CA, Olshanii, M, Weiss, MB, Quantum mechanics? It’s all fun and games until someone loses an \textsl{i}, 2022, \href{ 	
https://doi.org/10.48550/arXiv.2206.15343
}{arXiv:2206.15343v2 [quant-ph]}. 

\bibitem{FucSch09}Fuchs, CA, Schack, R, Quantum-Bayesian coherence, 2009, \href{ 	
https://doi.org/10.48550/arXiv.0906.2187
}{arXiv:0906.2187 [quant-ph]}.

\bibitem{FucSch11}Fuchs, CA, Schack, R, A Quantum-Bayesian route to
quantum-state space, \href{https://doi.org/10.1007/s10701-009-9404-8}{Found. Phys. 41 (2011), 345--356.}

\bibitem{FucSch13}Fuchs, CA, Schack, R, Quantum-Bayesian coherence, \href{https://doi.org/10.1103/RevModPhys.85.1693}{Rev. Mod.
Phys. 85 (2013), 1693--1715.}

\bibitem{JanLal13}Janotta, P, Lal, R, Generalized probabilistic theories
without the no-restriction hypothesis, \href{https://doi.org/10.1103/PhysRevA.87.052131}{Phys. Rev. A 87 (2013), 052131.}


\bibitem{Kutetal13} Kutyniok, G, Okoudjou KA, Philipp, F, Tuley, EK, Scalable frames, \href{https://doi.org/10.1016/j.laa.2012.10.046}{Linear Algebra Appl. 438 (2013), 2225--2238.}

\bibitem{Lam17}Lami, L, Non-classical correlations in quantum mechanics and
beyond, PhD Thesis, Universitat Aut\`{o}noma de Barcelona, 2017, \href{https://doi.org/10.48550/arXiv.1803.02902}{arXiv:1803.02902 [quant-ph].}

\bibitem{Lud70}Ludwig, G, \href{https://doi.org/10.1007/BFb0118652}{Deutung des Begriffs physikalische Theorie und
axiomatische Grundlegung der Hilbertraumstruktur der Quantenmechanik durch
Haupts\"{a}tze des Messens, Springer, Berlin, 1970.}


\bibitem{Lud85} Ludwig, G, \href{https://doi.org/10.1007/978-3-642-70029-3}{An Axiomatic Basis for Quantum Mechanics. Volume 1 Derivation of Hilbert Space Structure, Springer, Berlin, 1985.}

\bibitem{MulMas16}M\"{u}ller, MP, Masanes, L, \href{https://doi.org/10.1007/978-94-017-7303-4_5}{Information-Theoretic
Postulates for Quantum Theory, in: Chiribella, G, Spekkens, R (eds) Quantum
Theory: Informational Foundations and Foils. Fundamental Theories of Physics,
vol 181, Springer, Dordrecht, 2016, 139--170.}

\bibitem{Pla21}Pl\'{a}vala, M, General probabilistic theories: An
introduction, \href{https://doi.org/10.1016/j.physrep.2023.09.001}{Phys. Rep. 1033 (2023), 1--64.}


\bibitem{Ros11}Rosado, JI, Representation of quantum states as points in a
probability simplex associated to a SIC-POVM, \href{https://doi.org/10.1007/s10701-011-9540-9}{Found. Phys. 41 (2011), 1200--1213.}
\bibitem{Sco06} Scott, AJ, Tight informationally complete quantum measurements, \href{ 	
https://doi.org/10.1088/0305-4470/39/43/009}{J. Phys. A, 39 (2006), 13507.}
\bibitem{SinStu92}Singer, M, Stulpe, W, Phase-space representations of general
statistical physical theories, \href{
https://doi.org/10.1063/1.529975
}{J. Math. Phys. 33 (1992), 131--142.}
\bibitem{SloSzy20} S\l omczy\'{n}ski, W, Szymusiak, A, Morphophoric POVMs,
generalised qplexes, and 2-designs, \href{	https://doi.org/10.22331/q-2020-09-30-338}{Quantum 4 (2020), 338.}
\bibitem{SzySlo23} Szymusiak, A, S\l omczy\'nski, W, Can QBism exist without Q? Morphophoric measurements in generalised probabilistic theories, \href{https://doi.org/10.22331/q-2025-01-15-1598}{Quantum 9, 1598 (2025).}

\bibitem{Tak22}Takakura, R, Convexity and uncertainty in operational quantum
foundations, PhD Thesis, Kyoto University, 2022, \href{https://doi.org/10.48550/arXiv.2202.13834
}{arXiv:2202.13834v1 [quant-ph]}.

\bibitem{Udu12}Ududec, C, \href{http://hdl.handle.net/10012/7017}{Perspectives on the Formalism of Quantum Theory, PhD
Thesis, University of Waterloo, 2012.}


\bibitem{Wal18} Waldron, SFD, \href{https://doi.org/10.1007/978-0-8176-4815-2}{An introduction to finite tight frames, Birkh\"auser, New York, 2018.} 

\bibitem{Wet21}van de Wetering, J, Quantum Theory from Principles,
Quantum Software from Diagrams, PhD Thesis, Radboud University Nijmegen, 2021, \href{https://doi.org/10.48550/arXiv.2101.03608}{arXiv:2101.03608 [quant-ph]}.

\bibitem{Wil18}Wilce, A, A royal road to Quantum Theory (or thereabouts), \href{https://doi.org/10.3390/e20040227}{Entropy 20 (2018), 227.}

\bibitem{Wil19}Wilce, A, Conjugates, filters and quantum mechanics, \href{https://doi.org/10.22331/q-2019-07-08-158}{Quantum 3
(2019), 158.}


\end{thebibliography}
\end{document}